\documentclass{amsart}                     
%
%
\usepackage{graphicx}
\usepackage{amsmath}
\usepackage{amssymb}
\usepackage[english]{babel}
\usepackage{relsize}
\usepackage{amsthm}
\usepackage{mathrsfs}
\usepackage[latin1]{inputenc}
\usepackage{subfigure}
\usepackage{color}
\usepackage{comment}
\usepackage{tikz}
\usepackage{marginnote}
%
%
%
%
%
\DeclareMathOperator{\Tr}{Tr}

\newcommand{\TT}{\mathbb{T}}
\newcommand{\CC}{\mathbb{C}}
\newcommand{\RR}{\mathbb{R}}
\newcommand{\ZZ}{\mathbb{Z}}
\newcommand{\Aa}{\mathbb{A}}
\newcommand{\vv}{\mathrm{v}}
\newcommand{\norm}[1]{\|{#1}\|}

\def\dist{\mathrm{dist}}
\def\Anal{{\mathscr{A}}}
\def\Emb{{\mathscr{E}}}
\newcommand{\abs}[1]{|{#1}|}

\def\E{\varepsilon}
\def\bt{\bar{T}}
\def\cE{\mathfrak{e}}
\def\cN{\mathfrak{N}}
\newtheorem{problem}{Problem}
\newtheorem{remark}{Remark}
\newtheorem{theorem}{Theorem}
\newtheorem{lemma}{Lemma}

\begin{document}

\title[Orbit determination in regular zones: confidence region]{Orbit determination for standard-like maps: asymptotic expansion of the confidence region in regular zones}
  \thanks{This work was supported by the National Group of Mathematical Physics (GNFM-INdAM) through the project ``Orbit Determination: from order to chaos'' (Progetto Giovani 2019).}


\author{Stefano Mar\`o}


\address{Dipartimento di Matematica, Universit\`a di Pisa, 
  Largo B. Pontecorvo, 5, Pisa, Italy}

              \email{stefano.maro@unipi.it}           


\begin{abstract}
  We deal with the orbit determination problem for a class of maps of
  the cylinder generalizing the Chirikov standard map. The problem
  consists of determining the initial conditions and other parameters
  of an orbit from some observations. A solution to this problem goes
  back to Gauss and leads to the least squares method. Since the
  observations admit errors, the solution comes with a confidence
  region describing the uncertainty of the solution itself.  We study
  the behavior of the confidence region in the case of a simultaneous
  increase of the number of observations and the time span over which
  they are performed. More precisely, we describe the geometry of the
  confidence region for solutions in regular zones. We prove an estimate of the trend of the
  uncertainties in a set of positive measure of the phase space, made
  of invariant curve. Our result gives an analytical proof of some
  known numerical evidences.

\end{abstract}

\maketitle

\section{Introduction}
\label{intro}
Orbit determination is a classical problem in applied Celestial
Mechanics. It consists of recovering information on some parameters
(initial conditions or dynamical parameters) of a model, given some
observations.  The first notable result was obtained by Gauss in the
XIX century \cite{gauss}. He was able to recover the orbit of Ceres
given the $21$ observations made by Piazzi in different nights. Gauss
method was composed of two steps. First, an approximation of the
solution was computed, then, the least squares method was applied to
improve the first approximation.

This strategy is still in use nowadays and the applications have
become wide-ranging. The accurate determination of orbits of NEOs is
essential in the impact monitoring activity. On the other hand, the
targets of many space missions include the determination of some
unknown parameter. Typical examples are the ESA/JAXA BepiColombo
mission to Mercury, the NASA JUNO and ESA JUICE missions to Jupiter.

The result of an orbit determination process (called nominal solution)
always comes with a confidence region, summarizing the uncertainty of
the result itself. Its behavior is of crucial importance in applied
problems, for examples, it is at the base of the definition of the
impact probability in impact monitoring \cite{mv}. Hence, it is
important to study the confidence region as the number of observations
grows. However, due to the nature of the observation process, an
increase of the number of observations comes with a simultaneous
increase of the time interval over which they are performed.

The study of the practical feasibility of orbit determination in case
of a simultaneous increase of the number of observations and the time
span over which they are performed have been studied numerically in
\cite{ssm,smil}. The authors considered as a model the Chirikov
standard map \cite{chir}. This map is used to approximate many
phenomena in Celestial Mechanics (see \cite{cell}) and shows both
regular and chaotic dynamics.  The authors constructed the
observations by adding some noise to a true orbit of the map. Then,
they set up an orbit determination process to recover the true orbit.
The experiments showed that the result crucially depends on the
dynamics. More precisely, if the observations come from a chaotic
orbit, then the orbit determination process has to face the problem of
the so-called computability horizon. This prevents orbit determination
from being performed if the time span of the observations is too large.
Moreover, at least until the computability horizon, the uncertainties
decrease exponentially (w.r.t the number of observations). On the
other hand, if the observations come from a regular orbit (on an
invariant curve), then the problem of the computability horizon is no
more present and the uncertainties decrease polynomially. Moreover, if
together with the initial conditions a dynamical parameter has to be
determined, then in both the regular and chaotic case the
uncertainties decrease polynomially.

In this paper we give an analytical proof of some of the just
described numerical results. We will deal only with the regular case, in
which only the initial conditions have to be estimated. We consider a
class of perturbations of the integrable twist map, to which KAM theory
applies and generalizes the Chirikov standard map (see \cite{sm}).  We
will prove that there exists a set $\mathcal{K}$ of positive measure
of nominal solutions whose uncertainties satisfy the numerical
estimates in \cite{ssm,smil}. More precisely, the set $\mathcal{K}$
consists of invariant curves on which the dynamics is conjugate to a
Diophantine rotation. Hence, we will describe the behavior of the
confidence region for nominal solutions on such invariant curves as
the number of observations and the corresponding time span grow
simultaneously. Here, the Diophantine condition will play a crucial
role.

The paper is organized as follows. In Section \ref{statement} we
describe the orbit determination process that allows to give a formal
and general statement of the problem. Moreover, we will present our
main result, that will be proved in the subsequent sections. Section \ref{outline} is dedicated to give an outline of the proof of our main result. The purpose of this section is to give an informal idea of the proof and guide the reader through the detailed proof, presented in Section \ref{rig_proof}. The proof is divided in two steps: first we describe the dynamics of our class
of maps and introduce the set $\mathcal{K}$ in subsection \ref{sec:KAM}; subsequently, subsection
\ref{sec:confidecne} is dedicated to the study of the confidence
region for nominal solutions on invariant curves and to give the proof of our main result. In Section \ref{sec:comparison} we interpret our theoretical results in the light of the known numerical evidences. Section \ref{sec:conclusions} is dedicated to some
conclusions and statements of future work.

\section{Statement of the problem and main result}\label{statement}

Let $\mathbb{A} = \TT\times\RR$ be the cylinder, where we denote $\TT =\RR/\ZZ$. Consider a diffeomorphism $S: \Sigma \rightarrow \mathbb{A} $ defined on a strip $\Sigma = \TT\times (a,b)$ for $a<b$. Given an initial condition $(x,y)\in\Sigma $ and an integer $n$, we denote the $n$-th iterate by $(x_n,y_n) = S^n(x,y)$ and the corresponding orbit by $(x_n,y_n)_{n\in\ZZ}$ (note that the generic initial value for $n=0$ is denoted as $(x,y)$).
Beside the true orbits, suppose that we have been observing the evolution of the state of a system modeled by $S$ and we have got the observations $O_n=(\bar X_n,\bar Y_n)$ for $|n|\leq N$.
Following \cite{mg2010} we set up an orbit determination process in order to find the orbit of $S$ that better approximates, in the least squares sense, the given observations.
We first define the residuals
\begin{eqnarray*}
  &\xi_n(x,y)=O_n-S^n(x,y)=
  \left(
  \begin{array}{cc}
    \xi_n^x(x,y)\\
    \xi_n^y(x,y)
  \end{array}
  \right), \\
&\xi_n^x (x,y)= \bar X_n-x_n(x,y) \mod 2\pi,\quad \xi_n^y (x,y)= \bar Y_n-y_n(x,y).
\end{eqnarray*}

\noindent Subsequently, we say that the least squares solution $(x_0,y_0)$ is a minimizer (at least locally) of
the target function
$$
Q(x,y) = \frac{1}{2N+1}\sum_{|n|\leq N}\xi_n(x,y)^T\xi_n(x,y)=
\frac{1}{2N+1}\sum_{|n|\leq N}\left[(\xi_n^x)^2 + (\xi_n^y)^2\right].
$$

We will not be concerned with the existence and computation of the
minimum. This is a very delicate task, solved via iterative schemes
such as the Gauss-Newton algorithm and the differential
corrections. These algorithms crucially depend on the choice of the
initial conditions. See \cite{gbm}, \cite{mbbg} for some recent
results on this topic for the asteroid and space debris cases.  In the
following we will always suppose that the least squares solution
$(x_0,y_0)$ exists and we will refer to it as the nominal
  solution.

Since the observations contain errors, values of $(x,y)$ that make the target function a little bigger than the minimum $Q_0 = Q(x_0,y_0)$ are acceptable. This leads to the definition of the confidence region as
\[
\mathcal{Z} =\left\{ (x,y)\in\Aa \: : \: Q(x,y) \leq Q_0 +\frac{\sigma^2}{2N+1}   \right\},
\]
where $\sigma$ is chosen depending on statistical properties and bounds the acceptable errors; for our purposes, one can keep in mind $\sigma=1$. Expanding $Q(x,y)$ around the nominal solution (the minimum) $(x_0,y_0)$ up to second order we get
\begin{multline*}
Q(x,y) \sim Q(x_0,y_0)  + \\
\frac{1}{2N+1}
\left(\begin{array}{l}
  x-x_0 \\
  y-y_0
\end{array}
\right)^T
\sum_{|n|\leq N}\left[(DS^n)^TDS^n + (DS^n)\xi_n^T\right]_{(x_0,y_0)} \left(\begin{array}{l}
  x-x_0 \\
  y-y_0
\end{array}
\right)    .
\end{multline*}
Here, we denoted
\[
DS^n(x,y) =
\left(
\begin{array}{cc}
\frac{\partial x_n}{\partial x}(x,y) & \frac{\partial x_n}{\partial y}(x,y)  \\
\frac{\partial y_n}{\partial x}(x,y) & \frac{\partial x_n}{\partial y}(x,y)
\end{array}
\right).
\]
Under the hypothesis that the residuals corresponding to the nominal solution are small, we can neglect the term $\xi_n^T(x_0,y_0)$. Then, we define the normal matrix 

  \begin{equation}
\label{normmat}
    C_N(x,y) := \sum_{|n|\leq N}(DS^n)^T(x,y)DS^n(x,y)  
  \end{equation}
and the associated Covariance matrix
\[
\Gamma_N(x,y) = \left[C_N(x,y)\right]^{-1}. 
\]

Note that, $C_N(x,y)$ is positive definite since $S$ is a diffeomorphism and $DS^n$ has rank $2$.  
Hence, the confidence region can be approximated by the confidence ellipse
  
\[
\mathcal{E}_N(x_0,y_0) 
=\left\{ (x,y)\in\Aa \: : \:
\left(\begin{array}{l}
  x-x_0 \\
  y-y_0
\end{array}
\right)^T
C_N(x_0,y_0)
\left(
\begin{array}{l}
  x-x_0 \\
  y-y_0
\end{array}
\right)
\leq \sigma^2  \right\}.
\]
The Covariance matrix $\Gamma_N$ describes $\mathcal{E}_N$ in the sense that the axes are proportional to the square root of the eigenvalues of $\Gamma_N(x_0,y_0)$ and are directed along the corresponding eigenvectors.
The region $\mathcal{E}_N$ represents the uncertainty of the nominal solution: the values inside $\mathcal{E}_N$ are acceptable and the projections of $\mathcal{E}_N$ on the axes, denoted as $\sigma_x$ and $\sigma_y$ represent the (marginal) uncertainties of the coordinates. See Figure \ref{figure:ellipse}.
We remark that the matrices $C_N$, $\Gamma_N$ also have a probabilistic interpretation, see \cite{mg2010}.

\begin{figure}[h!]
   \centering
  \begin{tikzpicture}
    \draw [help lines, ->]  (-1.1,0) -- (9,0);
    \draw [help lines, ->] (0,-1.1) -- (0,6.5);
    \draw [rotate around={30:(5,3)}] (5,3) ellipse (100pt and 50pt);
    \draw [rotate around={30:(5,3)}] (1.5,3) -- (8.5,3);
    \draw [rotate around={30:(5,3)}] (5,1.2) -- (5,4.8);
    \draw [dotted] (1.82,2) -- (1.82,-0.5);
    \draw [dotted] (8.17,5) -- (8.17,-0.5);
    \draw [dotted] (-0.5,0.65) -- (4,0.65);
    \draw [dotted] (-0.5,5.35) -- (8,5.35);
    \draw [thick] (1.82,0) -- (8.17,0);
    \draw [thick] (0,0.65) -- (0,5.35);
    \node [below right] at (9,0) {$x$};
    \node [left] at (0,3) {$\sigma_y$};
    \node [below] at (5,0) {$\sigma_x$};
    \node  at (5.7,2.8) {$(x_0,y_0)$};
    \filldraw (5,3) circle[radius=2pt];
    \node [above left] at  (0,6.5) {$y$};
  \end{tikzpicture}
  \caption{The confidence ellipse for the nominal value $(x_0,y_0)$. The values $\sigma_x,\sigma_y$ represent the marginal uncertainties of $x_0,y_0$ respectively. }
   \label{figure:ellipse}
\end{figure}
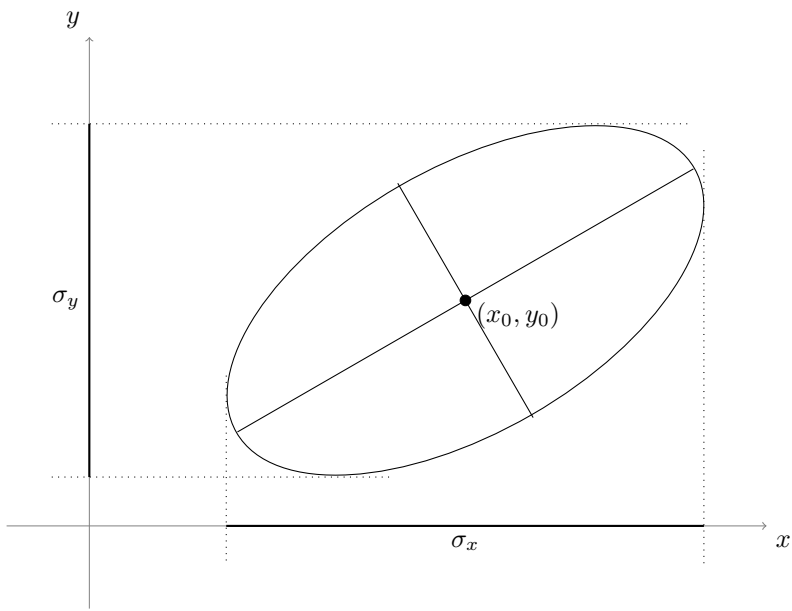

From the point of view of the applications, (e.g. impact monitoring \cite{mv}), it is of fundamental importance to know the shape and size of the confidence ellipse $\mathcal{E}$. Hence, the question that we are trying to address, stated in a broad sense, is the following:

\begin{problem}\label{p1}
  Given a diffeomorphism $S(x,y)$ of the cylinder and a nominal solution $(x_0,y_0)$, describe $\mathcal{E}=\mathcal{E}_N(x_0,y_0)$ for large $N$.
\end{problem}
\begin{remark}
The solution of the problem passes through the computation of the eigenvalues and eigenvectors of the matrix $ \Gamma_N(x_0,y_0)$ for large $N$. Note that they crucially depend on the dynamics, since we have to compute the linearized system along an orbit.
\end{remark}
\begin{remark}\label{remextend}
The statement of the problem can be generalized to different situations, such as more degrees of freedom and the case of continuous dynamics and flows. As an example, on can consider an Hamiltonian system with $p$ degrees of freedom defined by the Hamiltonian $H(\theta,I)=H_0(I)+\varepsilon H_1(\theta,I)$ and the corresponding flow $\Phi_t(\theta,I)$. Suppose to have the observations $(\bar{\theta}_n,\bar{I}_n)_{|n|\leq N}$ corresponding to $\Phi_n(\theta,I)$. As before we can set up an orbit determination process and define a $2p$-dimensional confidence ellipsoid and a corresponding Covariance Matrix. See \cite{mg2010}. 
  \end{remark}

\subsection{Main result}
In this paper we will give an answer the Problem \ref{p1} in a particular case.\\ 
Suppose that the map $S(x,y)$ is defined in the strip $\TT\times [a,b]$ with $b-a\geq 1$ and has the form
\begin{equation}
  \label{kammap}
\left\{
\begin{split}
  x_1 &= x+y+  f(x,y) \\
  y_1 &= y+  g(x,y).
\end{split}
\right.
\end{equation}
Here $f,g$ are bounded real analytic functions. 
Moreover, we suppose that the map is exact in the sense that there exists an analytic function $V:\TT\times (a,b) \rightarrow \RR$ such that
\[
y_1 dx_1 -ydx = dV(x,y).
\]
%
To state the result, we denote
\[
\E =\norm{f} +\norm{g},
\]
where, for the moment, $\norm{\cdot}$ represents a norm in the space of analytic functions that will be specified in Section \ref{secnot} (see \eqref{defe}).
\begin{remark}
As an example of map \eqref{kammap} we can consider the case $f(x,y)=g(x,y)=k\phi(x)$ with $k\in\RR$ and $\phi$ real analytic and $1$-periodic. The case $\phi(x)=\sin(2\pi x)$ represents the Chirikov standard map. Our main result, Theorem \ref{main}, will apply for small enough values of $k$. This kind of maps are prototypes for many applications in Celestial Mechanics. See \cite{sm}.
  \end{remark}

\noindent For $\E=0$, the map $S$ is linear and takes the form $S(x,y)=A\: (x,y)^T$ with
\[
A=
\left(
\begin{array}{cc}
  1 & 1 \\
  0 & 1
\end{array}
\right).
\]
This map is integrable and the phase space is foliated by invariant curves of the form $y = const$.
The normal matrix $C_N$ is independent on the nominal solution $(x_0,y_0)$ and reads
\[
C_N = \sum_{|n|\leq N}(A^n)^T A^n =
\left(
\begin{array}{cc}
  2N+1 & 0 \\
  0 & 2N+1+\sum_{|n|\leq N}n^2
\end{array}
\right).
\]
Hence, for large $N$,
\[
\Gamma_N = C^{-1}_N = 
\left(
\begin{array}{cc}
  \frac{1}{2N+1} & 0 \\
  0 & \frac{1}{2N+1+\sum_{|n|\leq N}n^2}
\end{array}
\right)
=
\left(
\begin{array}{cc}
  \frac{1}{2N+1} & 0 \\
  0 & \frac{3}{2N^3} +O(N^{-4})
\end{array}
\right).
\]

From this, we get the description of the confidence ellipse: the major axis is directed along the curve $y=y_0$ and has length with leading term $1/\sqrt{2N}$. The minor axis (orthogonal to the major axis) has length with leading term $\sqrt{3}/\sqrt{2N^3}$. Hence, the two coordinates of the nominal solution $(x_0,y_0)$ have different uncertainties: $\sigma_x\sim 1/ \sqrt{N}$ while $\sigma_y\sim 1/N^{3/2}$ for large values of $N$.

We are going to show how the linear situation is modified when one considers the map $S$ for $\E\neq 0$.\\
To state the theorem, we will denote
$\lambda_+=\lambda_+(x,y),\lambda_-=\lambda_-(x,y)$ the eigenvalues of $\Gamma_N(x,y)$ and $u_+=u_+(x,y),u_-=u_-(x,y)$ the corresponding eigenvectors. The proof will be given in Section \ref{sec:confidecne}.


%

\begin{theorem}\label{main}
Let $S$ be a diffeomorphism as before. There exist two positive constants $\underline{\kappa},\overline{\kappa}$ such that, for every $\E<\underline{\kappa}$ there exists a set $\mathcal{K}_\E$ of positive measure such that if
\begin{equation*}
 (x,y)\in \mathcal{K}_\E,
\end{equation*}
then for every $N> \frac{1}{\underline{\kappa}}$
\begin{align*}
 \lambda_+ &= \frac{1}{2N+1}(1 + \cE_+),
 \quad   &
u_+&=
\left(
\begin{array}{l}
  1 \\
  0 
\end{array}
\right)
+
\cE_u\left(
\begin{array}{l}
 1  \\
 1
\end{array}
\right),
 \\
 \lambda_- &= \frac{1}{2N+1+\sum_{|n|\leq N}n^2}(1 + \cE_-),   &
 u_-&=
\left(
\begin{array}{l}
  0 \\
  1 
\end{array}
\right)
+
\cE_u\left(
\begin{array}{l}
1 \\
  1
\end{array}
\right),
 \end{align*}
 where $\cE_{\pm}=\cE_{\pm}(x,y)$, $\cE_{u}=\cE_{u}(x,y)$ are real analytic functions with $|\cE_{\pm}|,|\cE_{u}|<\overline{\kappa}\E$.
\end{theorem}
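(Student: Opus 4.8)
The plan is to work on one invariant curve at a time and to reduce the whole computation of $C_N$ to the linearized dynamics along an orbit on that curve, which is governed by a cocycle over a Diophantine rotation. Below, $O(\E)$ denotes a quantity bounded by $C\E$ with $C$ depending only on $\|f\|,\|g\|$, on the width of the analyticity strip, and on two parameters $\tau,\gamma_0$ fixed now — in particular uniformly in $n$ and in $N$.

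\emph{Step 1 (KAM and the set $\mathcal K_\E$).} Fix $\tau>1$ and $\gamma_0>0$ small enough that $\{\omega:\ \|k\omega\|\ge \gamma_0|k|^{-\tau}\ \forall k\neq0\}$ has positive measure in a fixed subinterval of $(a,b)$. Since $\partial x_1/\partial y=1+f_y>0$ for $\E$ small, $S$ is a monotone twist map, and it is exact; hence analytic KAM theory provides, for $\E<\underline\kappa$ and for each such $\omega$, an invariant curve of rotation number $\omega$ given by an analytic embedding $\Psi=\Psi_\omega(\theta)=(\theta+U(\theta),\,y_\omega+V(\theta))$ with $S\circ\Psi=\Psi\circ R_\omega$, $R_\omega(\theta)=\theta+\omega$, and $\|U\|,\|V\|,|y_\omega-\omega|=O(\E)$ on a fixed strip (so $\|U'\|,\|V'\|=O(\E)$ by Cauchy estimates). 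Let $\mathcal K_\E$ be the union of these curves; a standard measure estimate gives $\mathrm{meas}(\mathcal K_\E)>0$, uniformly in $\E$. Fix $(x,y)=\Psi(\theta)\in\mathcal K_\E$ and write $D_n:=DS^n(\Psi(\theta))$.

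\emph{Step 2 (reducibility of the linearized dynamics).} Differentiating $S^n\circ\Psi=\Psi\circ R_\omega^n$ gives $D_n\,\Psi'(\theta)=\Psi'(\theta+n\omega)$, and exactness gives $\det D_n\equiv1$. In the moving frames $\{\Psi'(\theta),e_2\}$ and $\{\Psi'(\theta+n\omega),e_2\}$, $D_n$ becomes upper triangular, $M_n=\left(\begin{smallmatrix}1&\beta_n\\0&\delta_n\end{smallmatrix}\right)$, with $\delta_n=(1{+}U'(\theta))/(1{+}U'(\theta{+}n\omega))=1+O(\E)$ uniformly in $n$ (telescoping the one-step cocycle), and $\beta_n(\theta)=(1{+}U'(\theta))\sum_{k=0}^{n-1}\phi(\theta{+}k\omega)$ for an analytic $\phi$; since $DS$ equals $A=\left(\begin{smallmatrix}1&1\\0&1\end{smallmatrix}\right)$ up to $O(\E)$ and the frame change $P(\psi)=\left(\begin{smallmatrix}1+U'(\psi)&0\\ V'(\psi)&1\end{smallmatrix}\right)$ is $I+O(\E)$ uniformly, one gets $\phi=1+O(\E)$. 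Solving the cohomological equation $\chi\circ R_\omega-\chi=\phi-\bar\phi$ — possible because $\omega$ is Diophantine, with $\|\chi\|=O(\E)$, the frequency‑dependent loss being absorbed in the fixed $\gamma_0,\tau$ — turns the Birkhoff sum into $n\bar\phi+\chi(\theta{+}n\omega)-\chi(\theta)$, so that
\[
\beta_n=n\mu+\nu_n,\qquad \mu:=(1{+}U'(\theta))\bar\phi=1+O(\E)>0,\qquad \sup_n|\nu_n|=O(\E).
\]
Conjugating back by $P$ yields $D_n=A^n+E_n$ with $\|E_n\|=O(\E)(|n|+1)$ uniformly, and every coefficient of $n$ or $n^2$ occurring in the entries of $D_n$, hence in $\|D_n\|_F^2$, is of the form $1+O(\E)$. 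This step — in particular the uniform bound $\sup_n|\nu_n|=O(\E)$, i.e. that the twist part of $DS^n$ grows \emph{exactly linearly} with no exponential term — is the heart of the matter, and is the analytic counterpart of the absence of a computability horizon in regular zones.

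\emph{Step 3 (the two eigenvalues).} For $\lambda_{\min}(C_N)$: the Rayleigh quotient at $\Psi'(\theta)$ equals $\|\Psi'(\theta)\|^{-2}\sum_{|n|\le N}\|\Psi'(\theta{+}n\omega)\|^2=(2N{+}1)(1+O(\E))$, the upper bound; for the lower bound, $\sum_{|n|\le N}\|M_n\eta\|^2$ is the quadratic form with matrix $\left(\begin{smallmatrix}2N+1&\sum_n\beta_n\\ \sum_n\beta_n&\sum_n(\beta_n^2+\delta_n^2)\end{smallmatrix}\right)$, where $\sum_{|n|\le N}\beta_n=\sum_{|n|\le N}\nu_n=O(\E N)$ (the linear part cancels by symmetry) while $\sum_n(\beta_n^2+\delta_n^2)\ge cN^3$, so its least eigenvalue is $(2N{+}1)(1+O(\E))$, and the same follows for $C_N$ after conjugating by $P=I+O(\E)$. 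Hence $\lambda_{\min}(C_N)=(2N{+}1)(1+O(\E))$ and $\lambda_+=\lambda_{\max}(\Gamma_N)=\tfrac1{2N+1}(1+\cE_+)$. For $\lambda_{\max}(C_N)$, since $C_N$ is $2{\times}2$, $\lambda_{\max}(C_N)=\Tr C_N-\lambda_{\min}(C_N)$, and $\Tr C_N=\sum_{|n|\le N}\|D_n\|_F^2$; from $D_n=A^n+E_n$ and $\|E_n\|=O(\E)(|n|{+}1)$ one gets $\|D_n\|_F^2=\|A^n\|_F^2+O(\E)(|n|{+}1)^2=(n^2{+}2)+O(\E)(|n|{+}1)^2$, hence $\Tr C_N=\sum_{|n|\le N}n^2+2(2N{+}1)+O(\E N^3)=\Tr C_N^0+O(\E N^3)$, $C_N^0=\mathrm{diag}(2N{+}1,\,2N{+}1{+}\sum n^2)$ being the integrable normal matrix. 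Subtracting, the $(2N{+}1)$-terms cancel exactly and $\lambda_{\max}(C_N)=(2N{+}1{+}\sum_{|n|\le N}n^2)+O(\E N^3)=(2N{+}1{+}\sum n^2)(1+O(\E))$, using $2N{+}1{+}\sum n^2\ge\tfrac23N^3$; thus $\lambda_-=\lambda_{\min}(\Gamma_N)=\tfrac1{2N+1+\sum n^2}(1+\cE_-)$.

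\emph{Step 4 (eigenvectors, analyticity, and the remaining difficulty).} A Rayleigh‑quotient argument would only give $O(\sqrt\E)$ for the eigenvectors, so instead one estimates the residual: from $D_n\Psi'(\theta)=\Psi'(\theta{+}n\omega)$ one has $C_N\Psi'(\theta)=\sum_{|n|\le N}D_n^{\,T}\Psi'(\theta{+}n\omega)$, and a direct (if somewhat lengthy) computation — where one must exploit cancellation in the Birkhoff sums via the Diophantine condition, rather than bound term by term — gives $C_N\Psi'(\theta)=\lambda_{\min}(C_N)\,\Psi'(\theta)+O(\E N)$. Together with the spectral gap $\lambda_{\max}(C_N)-\lambda_{\min}(C_N)\ge cN^3$, the standard eigenvector perturbation estimate then yields $\big\|u_+-\Psi'(\theta)/\|\Psi'(\theta)\|\big\|=O(\E N)/O(N^3)=O(\E)$; since $\Psi'(\theta)=(1,0)^T+O(\E)$, this gives $u_+=(1,0)^T+O(\E)$, and $u_-=(0,1)^T+O(\E)$ by orthogonality — which is the content of the displayed formulas for $u_\pm$, with $\cE_u$ representing the $O(\E)$ entries of the error vectors. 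Real‑analyticity of $\cE_\pm,\cE_u$ in $(x,y)$ follows from analyticity of $(x,y)\mapsto C_N(x,y)$ on the strip and the uniform spectral gap, which makes the two spectral projections analytic. Finally, all implied constants depend only on the fixed data $\tau,\gamma_0,\|f\|,\|g\|$ and the strip width, so setting $\overline\kappa$ equal to their maximum and shrinking $\underline\kappa$ (so that KAM applies, $\mu>0$, and $N>1/\underline\kappa$ makes the lower bounds of order $N^3$ effective) completes the proof, uniformly in $N>1/\underline\kappa$ and $\E<\underline\kappa$. The main obstacle is Step 2 — the reducibility/uniform boundedness of $\nu_n$ — together with the fact, visible in Steps 3–4, that to reach a genuine $O(\E)$ (rather than $O(1/N)$) relative error one must keep the $O(N)$ contributions to $\Tr C_N$ and to $\lambda_{\min}(C_N)$ matched so that they cancel exactly.
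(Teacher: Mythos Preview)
Your proposal is correct and follows essentially the same route as the paper's proof: KAM invariant curves for Diophantine rotation numbers give the set $\mathcal K_\E$; the linearized cocycle along each curve is reduced to (a perturbation of) $\left(\begin{smallmatrix}1&n\bar T\\0&1\end{smallmatrix}\right)$ by solving a cohomological equation over the Diophantine rotation; and the eigenstructure of $C_N$ is then extracted by elementary $2\times2$ linear algebra. The differences are cosmetic. The paper uses the frame $(K',\,\Omega K'/|K'|^2)$, which has determinant $1$ and yields an \emph{exact} reduction to the constant matrix $\left(\begin{smallmatrix}1&\bar T\\0&1\end{smallmatrix}\right)$; you use $(K',e_2)$, which leaves the residual factor $\delta_n=1+O(\E)$ but is otherwise equivalent. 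For the eigenvalues, the paper writes $C_N=M^{-T}\widetilde C_N M^{-1}$, computes $\widetilde C_N$ entrywise (its Lemma~4), and reads off $\lambda^C_\pm$ from $\det C_N$ and $\Tr C_N$ via the characteristic polynomial; you use the Rayleigh quotient at $\Psi'(\theta)$ for $\lambda_{\min}$ and then the trace for $\lambda_{\max}$. For the eigenvectors, the paper computes the entries of $C_N$ explicitly and reads off $(C_N^{(2,1)},\,C_N^{(2,2)}-\lambda^C_-)^T$, whereas you use a residual/spectral-gap perturbation argument. One phrasing to tighten: your lower bound for $\lambda_{\min}(C_N)$ is not literally ``a conjugation by $P=I+O(\E)$'' (the inner factors $P_n^TP_n$ vary with $n$), but the argument is saved by the uniform matrix inequality $P_n^TP_n\ge(1-O(\E))I$, which gives $v^TC_Nv\ge(1-O(\E))\,w^TB_Nw$ for $w=P_0^{-1}v$ and hence the claimed lower bound. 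The paper's frame, having unit determinant, makes $\det C_N=\det\widetilde C_N$ exact and thereby avoids this step.
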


\begin{remark}
  It will come from the proof (see Theorem \ref{kamtheo}) that the measure of the set $\mathcal{K}_\E$ tends to the measure of the whole phase space as $\E \to 0$.
\end{remark}
\begin{remark}
The proof of the theorem will be constructive, in the sense that one can find numerical values for the relevant constants in the statement, in particular $\underline{\kappa}$. Although possible, this computation may become technically involved and goes beyond the purpose of the present paper. However, we will give some advice on how to get the numerical values.    
  \end{remark}

\begin{remark}\label{remtilt}
The result tells us that in the set $\mathcal{K}_\E$ the confidence ellipse for a nominal solution $(x_0,y_0)$ has the same size as in the linear case. If $\cE_u=0$, then the ellipse is horizontal and $\sigma_x,\sigma_y$ behaves asymptotically for large $N$ as in the linear case. If $\cE_u\neq 0$, then the ellipse is tilted (note that $u_+$ and $u_-$ are orthogonal). Hence, the semi-major axis projects on both coordinate axes. It follows that both uncertainties $\sigma_x,\sigma_y$ are of order $1/\sqrt{N}$. We recover the numerical results in \cite{ssm,smil} in which the same trend appeared for $S$ being the Chirikov standard map and the observations coming from an invariant curve. More details will be given in Section \ref{sec:comparison}.
  \end{remark}

\section{Outline of the proof of Theorem \ref{main}}\label{outline}
The proof of our main theorem is somehow technical. We resume in this section the main ideas, leaving the formal and detailed proof in the next section.

If $\E=0$, map \eqref{kammap} is the integrable twist map
\begin{equation}
  \label{inttwist}
\left\{
\begin{split}
  x_1 &= x+y \\
  y_1 &= y,
\end{split}
\right.
\end{equation}
whose phase space is foliated by invariant curves of the form $\{y=\omega \}$ with $\omega\in (a,b)$. Fixing $\omega$, each orbit on the corresponding curve is given by $(x+n\omega,\omega)_{n\in\ZZ}$, and corresponds to a rotation of angle $\omega$.

If $\E>0$ and small enough, map \eqref{kammap} is a perturbation of map \eqref{inttwist} and KAM theory shows that the phase space is no more foliated by horizontal invariant curves but many of them are preserved although slightly perturbed. More precisely, we introduce the class of Diophantine numbers, given $\gamma>0$ and $\tau>2$,
\begin{equation}
\mathcal{D}_{\gamma,\tau} =\left \{ \omega\in (a,b) \: :\: \left| \omega -\frac{p}{q} \right| \geq \frac{\gamma}{q^\tau} \qquad \forall p,q\in\mathbb{Z}, q\neq 0  \right\}.
\end{equation}
If $\omega\in\mathcal{D}_{\gamma,\tau}$ and $\E$ is small, then the corresponding curve of map \eqref{inttwist} is slightly perturbed becoming an invariant curve of map \eqref{kammap}. This curve can be parametrized by $K_\omega(s) = (1+\psi_\omega(s),\eta_\omega(s))$, $s\in\TT$, and the dynamics on it is conjugated to a rotation of angle $\omega$. With some abuse of notation, we also denote by $K_\omega$ the curve given by the image of $K_\omega(s)$ in the phase space. 
The curve $K_\omega$ is almost horizontal in the sense that
  \begin{equation}\label{estcu}
\norm{\psi_\omega}, \norm{\eta_\omega-\omega} = O(\E).
\end{equation}
It is crucial to remember that the smallness condition on $\E$ is given only in terms of the parameters $\gamma,\tau$ defining the set $\mathcal{D}_{\gamma,\tau}$. The same occurs for the remainder $O(\E)$.
Hence, once the parameters $\gamma,\tau$ are fixed, one gets a large quantity of invariant curves, one for each $\omega\in\mathcal{D}_{\gamma,\tau}$. It turns out that the union of these invariant curves forms a subset $\mathcal{K}_\E$ of the phase space of positive measure that tends to the measure of the phase space as $\E\to 0$. Moreover, all the curves in $\mathcal{K}_\E$ satisfy estimate \eqref{estcu} that is uniform in $\mathcal{K}_\E$ in the sense that the remainder $O(\E)$ does not depend on $\omega$ but only on $\gamma,\tau$. We refer to Theorem \ref{kamtheo} for a detailed statement.

The aim of the proof is to show that the set $\mathcal{K}_\E$ is the one satisfying Theorem \ref{main}. Fix a point $(x,y)\in\mathcal{K}_\E$, consider the curve $K(s)=(s+\psi(s), \eta(s))$ such that $(x,y) = K(s)$ for some $s\in\TT$ and let $\omega$ be its rotation number. It is enough to compute the eigenvalues and eigenvectors of the matrix $\Gamma_N^{-1}= C_N=C_N(x,y) = C_N(K(s))$ defined in \eqref{normmat}.

By the definition of $C_N$, the first step is to find an expression for the matrix $DS(K(s))$ representing the linearized dynamics around an invariant curve in $\mathcal{K}_\E$. Since the dynamics on the invariant curve is conjugated to a Diophantine rotation it is possible to show (see Lemmas \ref{russ_lemma} and \ref{llave_lemma}) that the linearized dynamics is upper triangular in the sense that there exists a matrix $M(s)$ with $\det M(s)=1$ and a constant $\bar{T}$ such that
  \[
M^{-1}(s+\omega) DS(K(s)) M(s) = \left(
\begin{array}{cc}
    1   &  \bar{T}    \\
    0   &  1  
    \end{array}
\right).
  \]
  In our case, we will have $\bar{T}=1+O(\E)$ and
  \[
  M(s) = \left(
\begin{array}{cc}
    1+O(\E)   & O(\E)     \\
    O(\E)   &  1+O(\E)   
    \end{array}
\right).
\]
As before, the remainders $O(\E)$ do not depend on the selected curve but only on $\gamma,\tau$.

Hence, by the chain rule
\begin{equation*}
\begin{aligned}
  C_N(K(s)) &= \sum_{|n|\leq N}[DS^n(K(s))]^TDS^n(K(s)) \\
  &= [M(s)]^{-T}\widetilde{C}_N(s)
  [M(s)]^{-1},
\end{aligned}
\end{equation*}
where
\[
\widetilde{C}_N(s) = \sum_{|n|\leq N}   \left(
  \begin{array}{cc}
    1   &  0  \\
    n\bar{T}  &   1 
    \end{array}
  \right) M^T(s+n\omega)
  M(s+n\omega)  \left(
  \begin{array}{cc}
    1   &  n\bar{T}  \\
    0  &   1 
    \end{array}
  \right). 
\]
A technical computation (see Lemma \ref{lemc}) shows that
 \[
  \widetilde{C}_N =
\left(
\begin{array}{cc}
 (2N+1)(1 + O(\E))   & O(N^2\E)   \\
  O(N^2\E)   &    \left(2N+1+\sum_{|n|\leq N}n^2\right)(1+O(\E))
    \end{array}
\right),
\]
the remainders $O(\E)$ depending only on $\gamma,\tau$.

Since $\det M =1$, it comes easily that the determinant
\[
d_N:= \det C_N(K(s))\sim N^4(1 + O(\E))
\]
for $N$ large. Moreover, it is easy to see that the trace
\[
t_N:= \Tr C_N(K(s))\sim N^3(1 + O(\E)).
\]
 The eigenvalues $\lambda^C_\pm$ of $C_N$ can be computed as roots of the characteristic polynomial giving
\[
\lambda^C_\pm= \frac{1}{2} \left(t_N \pm \sqrt{t_N^2 -4d_N}\right).
\]
It is straightforward that $\lambda^C_+ \sim N^3(1 + O(\E))$, while a more delicate analysis of the expression of $d_N$ and $t_N$ gives $\lambda^C_- \sim N(1 + O(\E))$. We get the eigenvalues $\lambda_\pm$ of $\Gamma_N$ remembering that $\lambda_\pm=1/\lambda^C_\mp$. The eigenvectors come from a direct computation, using carefully the expression of $C_N(K(s))$.

Finally, also the remainders appearing in these last steps only depend on the constants $\gamma,\tau$ defining the set $\mathcal{D}_{\gamma,\tau}$. Hence, the estimates do not depend directly on the selected curve $K(s)$ and can be made uniform in $\mathcal{K}_\E$.

\section{Rigorous proof of Theorem \ref{main}} \label{rig_proof}
\subsection{Notations}\label{secnot}

The complex strip around $\TT$ of width $\rho>0$ is defined as
\[
\TT_\rho = \{x \in \CC/ \ZZ\,:\,| {\Im}\,x|<\rho\}\,,
\]
and we denote by $\bar \TT_\rho$ its closure and by $\partial \TT_\rho = \{|\Im\,x|=\rho\}$ its boundary. 
The set $\Anal(\TT_\rho)$ represents the Banach space of holomorphic functions
$\phi:\TT_\rho \rightarrow \CC$, that can be continuously extended to
$\bar \TT_\rho$, and such that $\phi(\TT) \subset \RR$ (i.e. real-analytic),
endowed with the norm
\[
\norm{\phi}_{\rho} = \sup_{|\Im\, x|\leq \rho} \abs{\phi(x)}\,.
\]

We also introduce a complex strip of $ \TT\times (a,b)\subset \mathbb{A}$ as a complex connected open neighborhood $D\subset \CC/\ZZ\times \CC$ such that $D\cap \mathbb{A} = \TT\times (a,b)$.
We denote by $\Anal(D)$ the Banach space of holomorphic functions
$\varphi:D \rightarrow \CC$, that can be continuously extended to
$\bar{D}$, and such that $\varphi(\TT\times (a,b)) \subset \RR$ (i.e. real-analytic), endowed with the norm
\[
\norm{\varphi}_{D} = \sup_{z \in D} \abs{\varphi(z)}\,.
\]
These notations are extended to vector and matrix valued functions. Finally, the notation $\Emb(\TT_\rho,D)$ will denote the space of holomorphic embeddings $K:\TT_\rho\rightarrow D$ such that each component of $K(s)-(s,0)$ belongs to $\Anal(\TT_\rho)$.  \\
With these notations, the definition of the parameter $\E$ entering in the statement of Theorem \ref{main} is made precise. Since $S$ is real analytic, we can consider its analytic extension to the domain $D=\{(x,y)\in\TT_\rho\times\mathbb{C} \: : \:  \dist(y,[a,b])<\rho  \}$ for some $\rho>0$. We denote, recalling the expression of $S$ in \eqref{kammap},
\begin{equation}
  \label{defe}
\E = \norm{f}_D+\norm{g}_D.
\end{equation}

For $\gamma>0$ and $\tau>2$, the set of Diophantine numbers is defined as
\begin{equation}
  \label{dc}
\mathcal{D}_{\gamma,\tau} =\left \{ \omega\in (a,b) \: :\: \left| \omega -\frac{p}{q} \right| \geq \frac{\gamma}{q^\tau} \qquad \forall p,q\in\mathbb{Z}, q\neq 0  \right\}.
\end{equation}
  
The following notation will be used in the second part of the proof in which we will need several estimates. We denote by $\cE_i$, $i=1,\dots, 26$ any function in $\Anal(\TT_{\rho/4})$ such that $\norm{\cE_i}_{\rho/4}\leq c_i\E$ for a constant $c_i$ only depending on $\rho,\gamma,\tau$. Analogously, given $\alpha\in\mathbb{R}$, we denote by $\cN_j^\alpha$, $j=1,\dots, 7$  any function in $\Anal(\TT_{\rho/4})$ for which there exists a positive constant $\tilde{c}_j$, only depending on $\rho,\gamma,\tau$, such that $\norm{\cN_j^\alpha}_{\rho/4}\leq \tilde{c}_jN^\alpha$ for all integers $N>\frac{1}{\tilde{c}_j}$.
\begin{remark}
The notations $\cE_i$ and $\cN_j^\alpha$ are used to represent formally some remainders of order $\E$ and $N^\alpha$ that can be estimated using only the constants $\rho,\gamma,\tau$. Note that these remainders do not depend on the particular $\omega\in\mathcal{D}_{\gamma,\tau}$. 
\end{remark}

\subsection{The dynamics of the map $S$ and the set $\mathcal{K}_\E$}\label{sec:KAM}

The set $\mathcal{K}_\E$ in Theorem \ref{main} will come from the union of KAM invariant curves. The existence of many invariant curves comes from the classical version of KAM theory, while the fact that invariant curves fill a set of positive measure was proved in various contexts in \cite{alb,arn,laz,pos1}. See also the survey \cite{pos2} on this topic. We refer to the version for mappings given in \cite{shang} which is stated, in our notations, below.

\begin{theorem}
  \label{kamtheo}
  Let $S$ be a real analytic exact symplectic diffeomorphism of the form \eqref{kammap}.
  There exist two constants $\delta,c$ depending only on $\rho,\tau$ and not on $\gamma$ such that if
  \[
\E < \delta \gamma^2
  \]
  then, for every $\omega\in\mathcal{D}_{\gamma,\tau}$ there exists  $K_\omega\in\Emb(\TT_{\rho/2}, D)$ of the form $K_\omega(s)= (s+\psi_\omega(s),\eta_\omega(s))$ such that
  \begin{align}
    \label{inv_eq}
   &S( K_{\omega}(s)) = K_\omega(s+\omega) \quad \mbox{for every }s\in\TT_{\rho/2},
  \end{align}
  and
  \[
\norm{\psi_\omega}_{\rho/2}, \norm{\eta_\omega-\omega}_{\rho/2} < c\gamma^{-2}\E.
\]
  Moreover, the set
  \[
\mathcal{K}_{\E} = \bigcup_{\omega\in\mathcal{D}_{\gamma,\tau}} \left\{K_\omega(s) \: :\: s\in\TT \right\} 
  \]
  has Lebesgue measure
  \[
  \mu(\mathcal{K}_{\E}) \geq (b-a)(1-K_2\gamma) 
  \]
  for a positive constant $K_2$ depending on $\rho,\tau$. 
\end{theorem}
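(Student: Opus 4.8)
The plan is to prove the existence of the curves $K_\omega$ by a quadratically convergent KAM iteration in the analytic category, and to obtain the measure estimate from the standard bound on the complement of $\mathcal{D}_{\gamma,\tau}$ together with a Lipschitz dependence of $K_\omega$ on $\omega$. For the existence I would solve the invariance equation \eqref{inv_eq} directly for the embedding, by a Newton scheme in the scale of spaces $\Anal(\TT_{\rho_n})$ with $\rho_n$ decreasing to $\rho/2$. One starts from $K^{(0)}(s)=(s,\omega)$, whose invariance error $E(s):=S(K^{(0)}(s))-K^{(0)}(s+\omega)$ is $O(\E)$ by the form \eqref{kammap}. Given an approximate solution $K$ with error $E$, one linearizes and looks for a correction solving $DS(K(s))\,\Delta(s)-\Delta(s+\omega)=-E(s)$. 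The essential point — the mapping analogue of automatic reducibility, which reappears in Lemmas \ref{russ_lemma}--\ref{llave_lemma} — is that, using exactness of $S$ together with the approximate invariance of $K$, the frame built from $K'(s)$ and a symplectically conjugate vector conjugates $DS(K(s))$, up to an error of size $\norm{E}$, to an upper triangular matrix $\left(\begin{smallmatrix} 1 & \bar T(s)\\ 0 & 1\end{smallmatrix}\right)$. In this frame the linearized equation reduces to two scalar cohomological equations $u(s+\omega)-u(s)=v(s)$; since $\omega\in\mathcal{D}_{\gamma,\tau}$ one has $|e^{2\pi i k\omega}-1|\geq 4\gamma|k|^{-(\tau-1)}$, so R\"ussmann's estimate gives $\norm{u}_{\rho-\delta}\leq c\,\gamma^{-1}\delta^{-\tau}\norm{v}_\rho$ once the unique obstruction, the average of $v$, is killed — and that is possible thanks to the twist $\partial x_1/\partial y=1+O(\E)\neq 0$. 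The corrected error is quadratic, $\norm{E^{+}}_{\rho-\delta}\leq c\,\gamma^{-2}\delta^{-2\tau}\norm{E}_\rho^2$; choosing $\delta_n$ summable to $\rho/2$, the iteration converges provided $\E<\delta\gamma^2$ for a suitable $\delta=\delta(\rho,\tau)$, producing $K_\omega\in\Emb(\TT_{\rho/2},D)$ with $\norm{\psi_\omega}_{\rho/2},\norm{\eta_\omega-\omega}_{\rho/2}<c\gamma^{-2}\E$. I would carry out the whole scheme with $\omega$ as a parameter, so that the $K_\omega$ are obtained simultaneously and $\omega\mapsto K_\omega$ is Whitney-$C^1$, with $\partial_\omega K_\omega=(0,1)^T+O(\gamma^{-2}\E)$.

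For the measure, first note that $\mu((a,b)\setminus\mathcal{D}_{\gamma,\tau})\leq 2\gamma\sum_{q\geq 1}((b-a)q+1)q^{-\tau}\leq K_2'(b-a,\tau)\,\gamma$, the series converging since $\tau>2$. By Birkhoff's theorem the rotational invariant graphs $K_\omega$ of the twist map $S$ with distinct rotation numbers are pairwise disjoint and ordered; writing each as a graph $y=\Upsilon_\omega(x)$ (the substitution $x=s+\psi_\omega(s)$ is invertible because $\norm{\psi_\omega'}=O(\gamma^{-2}\E)<1$) and using the Lipschitz dependence on $\omega$ with $\partial_\omega\Upsilon_\omega(x)=1+O(\gamma^{-2}\E)$, for each fixed $x$ the slice $\{\Upsilon_\omega(x):\omega\in\mathcal{D}_{\gamma,\tau}\}$ has one-dimensional measure at least $(1-O(\gamma^{-2}\E))\,\mu(\mathcal{D}_{\gamma,\tau})$. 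Integrating over $x\in\TT$ and using $\E<\delta\gamma^2$, so that $O(\gamma^{-2}\E)=O(\delta)$, yields $\mu(\mathcal{K}_\E)\geq(1-O(\delta))(b-a)(1-K_2'\gamma)\geq(b-a)(1-K_2\gamma)$ for an appropriate $K_2=K_2(\rho,\tau)$. The same bound also follows from area preservation of $S$: the area of $\mathcal{K}_\E$ missing between two consecutive Diophantine curves is controlled by the flux across a vertical segment, computable from the generating function of $S$, and is $O(\gamma)$.

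The main obstacle is the convergence of the KAM iteration itself — propagating the quadratic estimates through the shrinking strips so that the small-divisor and domain-loss factors $\gamma^{-2}\delta_n^{-2\tau}$ are beaten by the quadratic gain, which is exactly what forces the smallness $\E<\delta\gamma^2$ — and, for the measure part, securing the Lipschitz-in-$\omega$ regularity of $\omega\mapsto K_\omega$ with the derivative bound needed to transfer $\mu(\mathcal{D}_{\gamma,\tau})$ to $\mu(\mathcal{K}_\E)$ without degrading the constant. Both are classical; since a mapping version of the existence statement is available in \cite{shang} and the positive-measure mechanism is treated in \cite{alb,arn,laz,pos1} (see also \cite{pos2}), I would in practice quote those results and merely check that our class \eqref{kammap} meets their hypotheses — exact symplecticity and the twist condition $\partial x_1/\partial y\neq 0$, both immediate for $\E$ small.
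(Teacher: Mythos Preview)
The paper does not actually prove Theorem~\ref{kamtheo}: it is quoted from the literature, specifically from \cite{shang}, with the positive-measure part attributed to \cite{alb,arn,laz,pos1,pos2}. Your proposal arrives at the same place --- you explicitly say you would ``in practice quote those results and merely check that our class \eqref{kammap} meets their hypotheses'' --- so there is no discrepancy; your additional sketch of the Newton/KAM iteration and the Lipschitz-in-$\omega$ measure argument is accurate and simply more detailed than anything the paper provides.
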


\begin{remark}
Remembering the domain of the diffeomorphism $S$, the factor $(b-a)$ represents the Lebesgue measure of the phase space. Therefore, the invariant curves in $\mathcal{K}_{\E}$ fill the phase space up to a set of Lebesgue measure proportional to $\sqrt{\E}$.
\end{remark}
\begin{remark}
It is possible to go through the proof given in \cite{shang} to get numerical values for the constants involved. More details can be found in \cite{haro_book}, in particular concerning the computation of the parameter $\delta$. The parameter $\gamma$ is free and gives a scale on the 'smallness' of $\E$ and on the measure of $\mathcal{K}_{\E}$. The parameter $\tau$ is also free and enters in the expressions for $\delta,c,K_2$. Finally, the parameter $\rho$ represents the domain of analyticity of the map $S$ and is fixed from the beginning. We will fix the parameters $\gamma<1$ and $\tau>2$ at the beginning of Section \ref{sec:confidecne}.
  \end{remark}

It will be important to study the linearized dynamics around an invariant curve. It turns out that it is upper-triangular. This is sometimes referred to as automatic reducibility (see \cite{haro_llave}). Before considering this result, let us recall the following fundamental result in KAM theory (see, e.g. \cite{russ})

\begin{lemma}\label{russ_lemma}
  Let $\omega\in\mathcal{D}_{\gamma,\tau}$. There exists a constant $c_R$ depending only on $\tau$ such that, for any function $v\in\Anal(\TT_{\rho/2})$ and with zero average, there exists a unique function $u\in\Anal(\TT_{\rho/4})$ with zero average, such that for every $s\in\TT_{\rho/4} $ 
\begin{equation}
  \label{russ}
u(s)-u(s+\omega)=v(s),
\end{equation}
and 
\begin{equation}
  \label{russ_est}
  \norm{u}_{\rho/4}\leq \frac{c_R}{\gamma(\rho/4)^\tau}\norm{v}_{\rho/2}.
\end{equation}
\end{lemma}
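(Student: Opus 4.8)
The statement to prove is the classical small-divisor lemma (Lemma~\ref{russ_lemma}): given a Diophantine rotation and a zero-average analytic function $v$ on a strip of width $\rho/2$, the cohomological equation $u(s)-u(s+\omega)=v(s)$ has a unique zero-average analytic solution on the smaller strip $\TT_{\rho/4}$, with a quantitative bound.

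\textbf{Plan of proof.} The approach is the standard Fourier-analytic one. First I would expand $v$ in its Fourier series $v(s)=\sum_{k\neq 0} \hat v_k e^{2\pi i k s}$ (the $k=0$ term vanishes since $v$ has zero average). Formally substituting $u(s)=\sum_{k\neq 0}\hat u_k e^{2\pi i k s}$ into \eqref{russ} and matching coefficients gives $\hat u_k(1-e^{-2\pi i k\omega})=\hat v_k$, hence
\[
\hat u_k = \frac{\hat v_k}{1-e^{-2\pi i k\omega}},\qquad k\neq 0,
\]
and $\hat u_0=0$. This formula already forces uniqueness within the class of zero-average functions, since the coefficients are determined. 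The point is that the denominators $1-e^{-2\pi i k\omega}$ are nonzero (because $\omega$ is irrational, being Diophantine) but can be small; the Diophantine condition \eqref{dc} controls how small. Concretely, $|1-e^{-2\pi i k\omega}| = 2|\sin(\pi k\omega)| \geq 4\,\|k\omega\|_{\ZZ}\geq 4\gamma/|k|^{\tau-1}$, where $\|\cdot\|_\ZZ$ denotes distance to the nearest integer and the last inequality uses $\left|\omega - p/k\right|\geq \gamma/|k|^\tau$. Thus $|\hat u_k| \leq \frac{|k|^{\tau-1}}{4\gamma}|\hat v_k|$.

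\textbf{From coefficient bounds to the sup-norm estimate.} The next step is to estimate $\|u\|_{\rho/4}$. Since $v\in\Anal(\TT_{\rho/2})$, the standard Cauchy estimate on Fourier coefficients gives $|\hat v_k|\leq \|v\|_{\rho/2}\, e^{-2\pi|k|\rho/2}$. Combining,
\[
\|u\|_{\rho/4} \leq \sum_{k\neq 0}|\hat u_k|\,e^{2\pi|k|\rho/4} \leq \frac{\|v\|_{\rho/2}}{4\gamma}\sum_{k\neq 0}|k|^{\tau-1} e^{-2\pi|k|\rho/4}.
\]
The remaining sum converges and can be bounded by $C(\tau)(\rho/4)^{-\tau}$ using the elementary inequality $t^{\tau-1}e^{-\pi t \rho/2}\le C'(\tau)(\rho/4)^{-(\tau-1)}e^{-\pi t\rho/4}$ together with a geometric series, yielding a bound of the asserted form \eqref{russ_est} with $c_R$ depending only on $\tau$. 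Absolute convergence of this series simultaneously shows that the formally defined $u$ is genuinely holomorphic on $\TT_{\rho/4}$ and continuous up to the boundary, that it is real-analytic (real Fourier coefficients pair up as $\hat u_{-k}=\overline{\hat u_k}$ since the same holds for $v$), and that it has zero average; and term-by-term it satisfies \eqref{russ}. This establishes existence.

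\textbf{Main obstacle.} There is no deep obstacle here — this is a textbook small-divisors estimate — but the one point requiring care is the loss of analyticity width, i.e.\ passing from a bound on $\|v\|_{\rho/2}$ to a bound on $\|u\|_{\rho/4}$ rather than on $\|u\|_{\rho/2}$. The factor $|k|^{\tau-1}$ coming from the small divisors must be absorbed by shrinking the strip, and the quantitative trade-off between the power $\tau-1$ and the width loss $\rho/4$ is exactly what produces the factor $\gamma^{-1}(\rho/4)^{-\tau}$ in \eqref{russ_est}; getting the constant to depend only on $\tau$ (and not on $\rho$ or $\gamma$) requires isolating the width-dependence cleanly, which is why the estimate is phrased with an explicit $(\rho/4)^{-\tau}$. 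The only other thing to watch is that one could alternatively lose a different arbitrary amount $\delta$ of width and get $c_R\delta^{-\tau}$; the paper fixes $\delta=\rho/4$ for definiteness, consistent with the domains used later in Lemmas~\ref{llave_lemma} and~\ref{lemc}.
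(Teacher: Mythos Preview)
Your argument is the standard Fourier-analytic proof of this classical small-divisor estimate, and it is correct. The paper itself does not supply a proof of Lemma~\ref{russ_lemma}: it simply states the result and refers to R\"ussmann~\cite{russ} (and \cite{haro_russ} for sharper constants), so there is nothing to compare against beyond noting that your approach is precisely the one underlying those references.
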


\begin{remark}
Estimates on the constant $c_R$ are provided in \cite{russ}. Recent results using computer assisted techniques are presented in \cite{haro_russ}.
  \end{remark}

\begin{lemma}
\label{llave_lemma}
Let $K(s) =(s+\psi(s),\eta(s))^T\in\Emb(\TT_{\rho/2},D)$ represent an invariant curve coming from Theorem \ref{kamtheo} with rotation number $\omega\in\mathcal{D}_{\gamma,\tau}$.\\
Then, there exist a matrix $M\in\Anal(\TT_{\rho/4})$ and a constant $\bar{T} = 1+\cE_1$ such that
  \[
M^{-1}(s+\omega) DS(K(s)) M(s) = \left(
\begin{array}{cc}
    1   &  \bar{T}    \\
    0   &  1  
    \end{array}
\right).
\]
Moreover, $M(s)$ can be written as
\begin{equation*}
 M(s)=  
  \left(
  \begin{array}{cc}
    1+\psi'(s)   &  \frac{-\eta'(s)}{\abs{K'(s)}^2}   \\
     \eta'(s)  &   \frac{1+\psi'(s)}{\abs{K'(s)}^2} 
    \end{array}
  \right)
   \left(
  \begin{array}{cc}
    1   &  u(s)  \\
    0  &   1 
    \end{array}
  \right),
\end{equation*}
with $|u|\leq \cE_2 $.

\end{lemma}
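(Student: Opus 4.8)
The plan is to exploit the invariance equation \eqref{inv_eq}, $S(K(s)) = K(s+\omega)$, and differentiate it with respect to $s$ to obtain the first column of the desired reducing matrix for free. Indeed, differentiating gives $DS(K(s))\,K'(s) = K'(s+\omega)$, so the vector field $K'(s) = (1+\psi'(s),\eta'(s))^T$ is an exact solution of the linearized cocycle equation: $DS(K(s))$ maps $K'(s)$ to $K'(s+\omega)$. This is the tangent direction to the invariant curve, and it will become the first column of $M(s)$. Since $K(s)-(s,0)$ has components of size $O(\E)$ by Theorem \ref{kamtheo} (with $\norm{\psi}_{\rho/2},\norm{\eta-\omega}_{\rho/2}<c\gamma^{-2}\E$), Cauchy estimates on the slightly smaller strip $\TT_{\rho/4}$ give $\norm{\psi'}_{\rho/4},\norm{\eta'}_{\rho/4} = O(\E)$, so the first column of $M$ is $(1,0)^T + \cE$-corrections, as claimed.

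For the second column, the natural choice (following Haro--de la Llave, \cite{haro_llave}) is the normal direction $N(s) := J K'(s)/\abs{K'(s)}^2$ where $J = \left(\begin{smallmatrix} 0 & -1 \\ 1 & 0\end{smallmatrix}\right)$; explicitly $N(s) = (-\eta'(s),1+\psi'(s))^T/\abs{K'(s)}^2$. With $M_0(s) := (K'(s) \mid N(s))$ one has $\det M_0(s)=1$ automatically, and because $S$ is exact symplectic, $DS(K(s))$ is symplectic, i.e. $(DS)^T J (DS) = J$; a short computation then shows that $DS(K(s)) N(s) = N(s+\omega) + \bar t(s) K'(s+\omega)$ for some scalar function $\bar t(s)$, i.e. $M_0^{-1}(s+\omega)\,DS(K(s))\,M_0(s) = \left(\begin{smallmatrix} 1 & \bar t(s) \\ 0 & 1\end{smallmatrix}\right)$. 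Explicitly $\bar t(s)$ is the symplectic pairing of $DS(K(s))N(s)$ with $N(s+\omega)$, and unwinding the definitions together with \eqref{kammap} gives $\bar t(s) = 1 + \cE$: the linear twist map has $\bar t \equiv 1$, and the $f,g$ terms contribute $O(\E)$ after Cauchy estimates on $\TT_{\rho/4}$. To finish, I would average out the $s$-dependence of $\bar t$: writing $\bar t(s) = \bar T + (\bar t(s) - \bar T)$ with $\bar T$ the average of $\bar t$ over $\TT$, the zero-average function $\bar t(s)-\bar T$ can be written (by the structure of the equation) as $\bar T(u(s+\omega)-u(s))$ for a unique zero-average $u\in\Anal(\TT_{\rho/4})$, via Lemma \ref{russ_lemma} applied to $v = (\bar t - \bar T)/\bar T \in \Anal(\TT_{\rho/2})$ — note $\bar t$ is defined and analytic on $\TT_{\rho/2}$ since $K$ and $DS\circ K$ are, and the quotient by $\bar T = 1+O(\E)$ is harmless. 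The estimate \eqref{russ_est} gives $\norm{u}_{\rho/4}\leq c_R\gamma^{-1}(\rho/4)^{-\tau}\norm{v}_{\rho/2} = O(\E)$, i.e. $\abs{u}\leq \cE_2$. Then setting $M(s) = M_0(s)\left(\begin{smallmatrix}1 & u(s)\\0&1\end{smallmatrix}\right)$ and computing $M^{-1}(s+\omega) DS(K(s)) M(s)$, the inner conjugation by $\left(\begin{smallmatrix}1&u\\0&1\end{smallmatrix}\right)$ converts the $s$-dependent off-diagonal entry $\bar t(s)$ into the constant $\bar T$, giving exactly the stated normal form with $\bar T = 1+\cE_1$ and $M$ as displayed.

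The main obstacle I expect is the bookkeeping for the second column: one has to verify carefully, using exact symplecticity, that $DS(K(s))$ sends the normal direction $N(s)$ into the two-dimensional span of $K'(s+\omega)$ and $N(s+\omega)$ with the $N(s+\omega)$-coefficient equal to $1$ (this last is precisely where $\det DS = 1$, i.e. symplecticity in dimension two, enters), and then to track that the scalar coefficient $\bar t(s)$ is $1 + O(\E)$ uniformly — the $O(\E)$ must depend only on $\rho,\gamma,\tau$, which forces all Cauchy estimates to be taken on the fixed strip $\TT_{\rho/4}$ with the domain loss $\rho/2 \to \rho/4$ absorbing the derivative of $K$. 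Verifying that $\bar t(s)$ and hence $v$ extends analytically to $\TT_{\rho/2}$ (so that Lemma \ref{russ_lemma} applies with the right domains) is a point to state explicitly. The rest — invertibility of $M$ (its determinant is $1$ since $\det M_0 = 1$ and the shear has determinant $1$), and the final matrix identity — is routine linear algebra.
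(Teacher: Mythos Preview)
Your approach is exactly the paper's: differentiate the invariance to get the tangent column $K'$, complete to the frame $M_0=(K'\mid N)$ with $N=\Omega K'/|K'|^2$, use $\det DS=1$ to obtain the upper-triangular form with off-diagonal entry $T(s)$, then remove the $s$-dependence by a further shear $\left(\begin{smallmatrix}1&u\\0&1\end{smallmatrix}\right)$ and Lemma~\ref{russ_lemma}.

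One correction, however: the shear conjugation gives
\[
\begin{pmatrix}1&-u(s+\omega)\\0&1\end{pmatrix}
\begin{pmatrix}1&\bar t(s)\\0&1\end{pmatrix}
\begin{pmatrix}1&u(s)\\0&1\end{pmatrix}
=
\begin{pmatrix}1&u(s)-u(s+\omega)+\bar t(s)\\0&1\end{pmatrix},
\]
so the cohomological equation you must solve is $u(s)-u(s+\omega)=\bar T-\bar t(s)$, with \emph{no} factor of $\bar T$. Your choice $v=(\bar t-\bar T)/\bar T$ would leave the off-diagonal equal to $\bar t(s)(1-1/\bar T)+1$, still $s$-dependent. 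Simply apply Lemma~\ref{russ_lemma} to $v=-\bar t+\bar T$, which already has zero average; the estimate $|u|\le\cE_2$ then follows directly from \eqref{russ_est} without any division by $\bar T$. This is precisely what the paper does.
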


\begin{proof}
 Differentiating \eqref{inv_eq} we get
\begin{equation}\label{inv_diff}
DS (K(s))K'(s) = K'(s+\omega). 
  \end{equation}
Since $K$ is an embedding it is well defined
\[
N(s)= \frac{1}{\abs{K'(s)}^2}\Omega K'(s), \qquad \Omega = \left(
  \begin{array}{cc}
   0   &   -1  \\
    1  &  0  
    \end{array}
  \right),
\]
and we can consider the matrix $M_K\in\Anal(\TT_{\rho/2})$ given by
  \[
M_K(s) = (K'(s) \quad N(s) )=
  \left(
  \begin{array}{cc}
    1+\psi'(s)   &  \frac{-\eta'(s)}{\abs{K'(s)}^2}   \\
     \eta'(s)  &   \frac{1+\psi'(s)}{\abs{K'(s)}^2} 
    \end{array}
  \right).
  \]
  Using \eqref{inv_diff} and the fact that $M_K^{-1}=-\Omega M_K^T\Omega$ we get that
\begin{equation}\label{inv_diff}
M_K(s+\omega)^{-1} DS(K(s))M_K(s) = 
  \left(
  \begin{array}{cc}
    1   &  T(s)   \\
     0  &   1 
    \end{array}
  \right),
  \end{equation}  
where
\begin{equation}\label{formt}
T(s) = N(s+\omega)^T\Omega DS(K(s)) N(s) \in\Anal(\TT_{\rho/2}).
\end{equation}


\noindent Let us denote the average $\bar{T} =\int_{\TT}T(s)ds$. From Lemma \ref{russ_lemma} there exists a unique zero-average function $u\in\Anal(\TT_{\rho/4})$ satisfying \eqref{russ_est} with $v(s) =-T(s)+\bar{T} $ such that, for every $s$,
\begin{equation*}
u(s)-u(s+\omega)=-T(s)+\bar{T}.
\end{equation*}
Consider the matrix
\begin{equation}
  \label{emme}
M(s) = M_K(s)
 \left(
  \begin{array}{cc}
    1   &  u(s)   \\
     0  &   1 
    \end{array}
  \right)\in\Anal(\TT_{\rho/4}) .
\end{equation}
Then, using \eqref{emme} and \eqref{inv_diff},
\begin{equation*}
  \begin{split}
    M(s+\omega)^{-1} DS(K(s))M(s)& =
    \left(
    \begin{array}{cc}
      1   & - u(s+\omega)   \\
      0  &   1 
    \end{array}
    \right)
    \left(
    \begin{array}{cc}
      1   &  T(s)   \\
      0  &   1 
    \end{array}
    \right)
    \left(
    \begin{array}{cc}
      1   &  u(s)   \\
      0  &   1 
    \end{array}
    \right)
     \\
   & =
    \left(
    \begin{array}{cc}
      1   & u(s)- u(s+\omega)+T(s)   \\
      0  &   1 
    \end{array}
    \right)
    =
    \left(
    \begin{array}{cc}
      1   & \bar{T}   \\
      0  &   1 
    \end{array}
    \right).
  \end{split}
\end{equation*}
Finally, from the expression of $T(s)$ in \eqref{formt} and the form of the diffeomorphism $S$ in (\ref{kammap}-\ref{defe}) it holds that
  \[
T(s) = 1+\cE_3, \quad  \bt = 1+\cE_1.
\]
From this, we have $ |T-\bt|\leq\cE_4$ and, from \eqref{russ_est} with $v=T-\bt$,
  \[
|u|\leq \cE_2. 
  \]
The proof of Lemma \ref{llave_lemma} is complete.

  \end{proof}

\begin{remark}
 From the definition of $\cE_i$, the estimates on $\bt$ and $u$ can be written as
    \[
    |\bt -1|\leq c_1\E, \qquad \norm{u}_{\rho/4}\leq c_2\E
    \]
for some constants $c_1$ and $c_2$ only depending on $\rho,\gamma,\tau$. It comes from the proof that to get the numerical values for these constants one needs to compute the constants $c,c_R$ of Theorem \ref{kamtheo} and Lemma \ref{russ_lemma}.  
  \end{remark}

\subsection{The confidence ellipse for nominal values on invariant curves and proof of the main Theorem}\label{sec:confidecne}

We conclude the proof of Theorem \ref{main}. To this aim, we compute the confidence ellipse for nominal solutions on the invariant curves coming from Theorem \ref{kamtheo}.\\

Let $\rho>0$ represent the domain of analyticity of $S$. Let us fix $\tau>2$ and consider the constant $\delta$ coming from Theorem \ref{kamtheo}.
Fix $0<\gamma<1$ such that $\delta\gamma^2<1$ and consider the corresponding Diophantine condition $\mathcal{D}_{\gamma, \tau}$. For any $\E< \delta\gamma^2<1$, we can define the corresponding set $\mathcal{K}_\E$. We do not fix $\E$ and leave it as a free parameter. The aim of the proof to determine a more stringent upper bound $\underline{\kappa}$ for $\E$ depending only on $\rho,\gamma,\tau$. This will be possible thanks to the special form of the estimates in Theorem \ref{kamtheo} and Lemmas \ref{russ_lemma}-\ref{llave_lemma}.   
\begin{remark}\label{rem_ex}
 As an example of how we are going to determine the upper bound for $\E$, let us get an estimate the we will use in the following.
The set $\mathcal{K}_\E$  is made of the union of invariant curves of the form
\[
K(s)=(s+\psi(s), \eta(s)), \quad s\in\TT,
\]
satisfying \eqref{inv_eq} with $\omega\in\mathcal{D}_{\gamma, \tau}$. Moreover, $\psi,\eta\in\Anal(\TT_{\rho/4})$ and satisfy, from Theorem \ref{kamtheo} and Cauchy estimate,
  \[
\norm{\psi'}_{\rho/4}, \norm{\eta'}_{\rho/4} < \frac{4c}{\gamma^2\rho} \E =:c_5\E,
\]
that means $\psi',\eta'  =  \cE_5$.
 Hence, defining $\vv(s): = |K'(s)|^2$, we have
\begin{equation}
  \label{normtg}
\vv(s)= |1+\psi'(s)|^2+|\eta'(s)|^2\leq 1+2(c_5+c_5^2)\E =: 1+c_6\E,
\end{equation}
that means $\vv(s)= 1+ \cE_6$. From this estimate we have that for $\E<\frac{1}{2c_6}$
  \begin{equation}
    \label{tgunif}
\vv = 1+\cE_6 > 1-\norm{\cE_6}_{\rho/4} > 1-c_6\E >\frac{1}{2}
  \end{equation}
 and $c_6$ depends only on $\rho,\gamma,\tau$. Therefore, choosing $\E< \kappa_1 = \min \{\frac{1}{2c_6}, \delta\gamma^2  \}$, we guarantee that estimate \eqref{tgunif} holds for every curve in $\mathcal{K}_\epsilon$. 
\end{remark}

To prove Theorem \ref{main}, fix a point $(x,y)\in\mathcal{K}_\E$ and consider the curve $K(s)=(s+\psi(s), \eta(s)) $ such that $(x,y) = K(s)$ for some $s\in\TT$. We need to compute the eigenvalues and eigenvectors of the matrix $\Gamma_N=\Gamma_N(x,y) = \Gamma_N(K(s))$. From the definition of $\Gamma_N$ it is enough to compute the eigenvalues of the matrix $\Gamma_N^{-1}= C_N=C_N(x,y) = C_N(K(s)) $ defined in \eqref{normmat}.

Since $K$ satisfies \eqref{inv_eq} we can apply Lemma \ref{llave_lemma} and get the existence of a function $u\in\Anal(\TT_{\rho/4})$ with null average and a matrix $M\in\Anal(\TT_{\rho/4})$, of the form
\begin{equation}\label{defm}
 M(s)=  
  \left(
  \begin{array}{cc}
    1+\psi'(s)   &  \frac{-\eta'(s)}{\vv(s)}   \\
     \eta'(s)  &   \frac{1+\psi'(s)}{\vv(s)} 
    \end{array}
  \right)
   \left(
  \begin{array}{cc}
    1   &  u(s)  \\
    0  &   1 
    \end{array}
  \right),
\end{equation}
with $|u|\leq \cE_2$ and such that
\[
DS(K(s)) = M(s+\omega)  \left(
  \begin{array}{cc}
    1   &  \bar{T}  \\
    0  &   1 
    \end{array}
  \right)  M^{-1}(s),
  \]
  for a positive constant $\bar{T}=1+\cE_1$.   By the chain rule, for $n\in\ZZ$,
\[
DS^n(K(s)) = M(s+n\omega)  \left(
  \begin{array}{cc}
    1   &  n\bar{T}  \\
    0  &   1 
    \end{array}
  \right)  M^{-1}(s),
  \]
  that gives
\begin{equation}   \label{formulac}
\begin{aligned}
  C_N(K(s)) &= \sum_{|n|\leq N}[DS^n(K(s))]^TDS^n(K(s)) \\
  &= [M(s)]^{-T}\widetilde{C}_N(s) [M(s)]^{-1},
\end{aligned}
\end{equation}
where
\begin{equation}\label{defctilde}
\widetilde{C}_N(s) = \sum_{|n|\leq N}   \left(
  \begin{array}{cc}
    1   &  0  \\
    n\bar{T}  &   1 
    \end{array}
  \right) M^T(s+n\omega)
  M(s+n\omega)  \left(
  \begin{array}{cc}
    1   &  n\bar{T}  \\
    0  &   1 
    \end{array}
  \right). 
\end{equation}



In the following technical lemma we study the matrix $\widetilde{C}_N$. From now on,  given a function $f(s)$ we denote $f_n=f(s+n\omega)$, $f=f(s)$.

\begin{lemma}
  \label{lemc}
 For every $N>1$, the matrix $\widetilde{C}_N$ is of the form
  \[
  \widetilde{C}_N =
\left(
\begin{array}{cc}
 (2N+1)(1 + \cE_6)   & \cE_7 \cN_1^2   \\
  \cE_7 \cN_1^2   &    \left(2N+1+\sum_{|n|\leq N}n^2\right)(1+\cE_{10})
    \end{array}
\right),
\]
with
    \[
    \det \widetilde{C}_N = \left(2N+1+\mathsmaller{\sum_{|n|\leq N}}n^2\right)(2N+1)(1+\cE_{12}).
    \]
  \end{lemma}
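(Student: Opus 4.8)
\emph{Plan of proof.} Everything rests on one structural identity. The matrix $M$ of \eqref{defm} has the form $M=M_K\left(\begin{smallmatrix}1&u\\0&1\end{smallmatrix}\right)$ with $M_K=\left(\begin{smallmatrix}1+\psi'&-\eta'/\vv\\ \eta'&(1+\psi')/\vv\end{smallmatrix}\right)$, and a one-line computation gives $M_K^{T}M_K=\mathrm{diag}(\vv,1/\vv)$. Conjugating this diagonal matrix by the shear, and writing $f_n:=f(s+n\omega)$, one obtains
\[
P_n:=M^{T}(s+n\omega)\,M(s+n\omega)=
\begin{pmatrix}
\vv_n & \vv_n u_n\\
\vv_n u_n & \vv_n u_n^2+\vv_n^{-1}
\end{pmatrix}.
\]
By Remark \ref{rem_ex} (which makes $\vv_n>1/2$ uniform once $\E<\kappa_1$), Lemma \ref{llave_lemma}, and the Cauchy estimates applied to Theorem \ref{kamtheo}, we have $\vv_n=1+\cE_6$, $\vv_n^{-1}=1+\cE_8$, $u_n=\cE_2$, $\bar T=1+\cE_1$, with all constants depending only on $\rho,\gamma,\tau$; moreover each entry of $P_n$ lies in $\Anal(\TT_{\rho/4})$ because $\vv_n$ does not vanish there. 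This is the only step that uses the special form of the KAM estimates.

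\emph{The entries.} Inserting $P_n$ into \eqref{defctilde} and expanding the matrix products, the entries of $\widetilde C_N$ become finite sums over $|n|\leq N$ of the scalars $\vv_n$, $\vv_n u_n$, $\vv_n^{-1}$, $\vv_n u_n^2$ carrying the weights $1$, $n\bar T$, $n^2\bar T^2$. Splitting $\vv_n$ and $\vv_n^{-1}$ into their constant parts $1$ and their $\cE$-parts and using $\sum_{|n|\leq N}1=2N+1$, $\sum_{|n|\leq N}n=0$ (which annihilates every term linear in $n$), the crude bound $\sum_{|n|\leq N}|n|^k\leq 2N^{k+1}$, and $\|\cE_i\|_{\rho/4}\leq c_i\E$, one gets immediately $[\widetilde C_N]_{11}=\sum_{|n|\leq N}\vv_n=(2N+1)(1+\cE_6)$. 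For the off-diagonal entry, $[\widetilde C_N]_{12}=\bar T\sum_{|n|\leq N}n\vv_n+\sum_{|n|\leq N}\vv_n u_n$; here the first sum equals $\sum_{|n|\leq N}n\,\cE_6(s+n\omega)$, of norm $\leq cN^2\E$, and the second has norm $\leq cN\E$, whence $[\widetilde C_N]_{12}=\cE_7\cN_1^2$. Finally,
\[
[\widetilde C_N]_{22}=\bar T^2\sum_{|n|\leq N}n^2\vv_n+2\bar T\sum_{|n|\leq N}n\,\vv_n u_n+\sum_{|n|\leq N}\big(\vv_n^{-1}+\vv_n u_n^2\big),
\]
in which $\bar T^2\sum_{|n|\leq N}n^2\vv_n=\big(\sum_{|n|\leq N}n^2\big)(1+\cE_9)$ and $\sum_{|n|\leq N}\vv_n^{-1}=(2N+1)(1+\cE_8)$, while the two remaining sums have norm $\leq c\E N^2$ and $\leq c\E N$; since $N^2\leq 2N+1+\sum_{|n|\leq N}n^2$ for $N>1$, collecting everything yields $[\widetilde C_N]_{22}=\big(2N+1+\sum_{|n|\leq N}n^2\big)(1+\cE_{10})$.

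\emph{The determinant.} As $\widetilde C_N$ is symmetric, $\det\widetilde C_N=[\widetilde C_N]_{11}[\widetilde C_N]_{22}-\big([\widetilde C_N]_{12}\big)^2$. The cross term $\big([\widetilde C_N]_{12}\big)^2=\big(\cE_7\cN_1^2\big)^2$ has $\Anal(\TT_{\rho/4})$-norm $\leq c\E^2N^4$, whereas, after a further shrinking of $\E$ that forces $|1+\cE_6|,|1+\cE_{10}|>1/2$ on $\TT_{\rho/4}$, the product $[\widetilde C_N]_{11}[\widetilde C_N]_{22}$ has modulus $>\tfrac14(2N+1)\big(2N+1+\sum_{|n|\leq N}n^2\big)>cN^4$ there. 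Hence the ratio $\big([\widetilde C_N]_{12}\big)^2/\big([\widetilde C_N]_{11}[\widetilde C_N]_{22}\big)$ is of type $\cE_{11}$, and since $[\widetilde C_N]_{11}[\widetilde C_N]_{22}=(2N+1)\big(2N+1+\sum_{|n|\leq N}n^2\big)(1+\cE_6)(1+\cE_{10})$, the multiplicativity of the $\cE$-remainders (valid for $\E<1$) gives
\[
\det\widetilde C_N=\Big(2N+1+\sum_{|n|\leq N}n^2\Big)(2N+1)(1+\cE_{12}).
\]

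\emph{Main difficulty.} No individual step is hard; the real work is the bookkeeping that keeps every constant produced along the way dependent only on $\rho,\gamma,\tau$ — which is exactly what the uniform estimates of Theorem \ref{kamtheo}, Lemma \ref{russ_lemma} and Lemma \ref{llave_lemma} guarantee — together with the verification, through the elementary inequality $N^2\leq 2N+1+\sum_{|n|\leq N}n^2$ ($N>1$) and the lower bounds for the diagonal entries, that the remainders in the $(2,2)$ entry and in the determinant are genuinely of lower order.
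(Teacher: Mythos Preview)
Your proof is correct and follows essentially the same approach as the paper: you compute $M^{T}M$ explicitly (the paper writes it as $\vv_n\left(\begin{smallmatrix}1&u_n\\u_n&u_n^2+\vv_n^{-2}\end{smallmatrix}\right)$, which is your $P_n$), insert it into \eqref{defctilde}, and estimate each entry using $\sum_{|n|\leq N}n=0$ and the uniform bounds on $\vv_n,u_n,\bar T$. Your treatment is in fact slightly more explicit than the paper's in two places---you spell out why the $O(\E N^2)$ terms can be absorbed into $\cE_{10}$ via $N^2\leq 2N+1+\sum_{|n|\leq N}n^2$, and you make the lower bound on the diagonal product explicit when handling the determinant---whereas the paper just writes ``similar computations give'' for the latter.
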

\begin{proof}
 Let us compute, using \eqref{defm},
  \begin{equation*}
  \begin{split}
M^T(s+n\omega) M(s+n\omega) &=  
   \vv_n
  \left(
  \begin{array}{cc}
    1   &  u_n    \\
    u_n   &  u_n^2 + \frac{1}{\vv_n^2} 
    \end{array}
  \right),
  \end{split}
\end{equation*}
and use it into \eqref{defctilde} to get

\begin{equation*}
  \widetilde{C}_N =  
\left(
\begin{array}{cc}
   \sum_{|n|\leq N}\vv_n   & \bt\sum_{|n|\leq N}n\vv_n      \\
   \bt\sum_{|n|\leq N}n\vv_n & \bt^2\sum_{|n|\leq N}n^2\vv_n+\sum_{|n|\leq N} \frac{1}{\vv_n}
    \end{array}
\right) +R_N,
\end{equation*}
where the remainder $R_N$ satisfies, using \eqref{normtg} and the estimates in Lemma \ref{llave_lemma}, 
%
\[
R_N=  
\left(
\begin{array}{cc}
   0   & \sum_{|n|\leq N}\vv_nu_n      \\
 \sum_{|n|\leq N} \vv_nu_n \:  &\:  2\bt\sum_{|n|\leq N} n\vv_n u_n + \sum_{|n|\leq N}\vv_n u^2_n   
    \end{array}
\right)=
\cE_7\left(
\begin{array}{cc}
   0   & \cN_1      \\
 \cN_1 \:  &\:  \cN_2^2   
    \end{array}
\right).
\]
%
From this expressions we get for $N>1$,
\begin{align*}
  \widetilde{C}_N^{(1,1)} &= (2N+1)(1 + \cE_6), \quad  \widetilde{C}_N^{(1,2)}= \widetilde{C}_N^{(2,1)} = \cE_8 \cN_1^2.
\end{align*}
Note that in the second estimate we used that
\[
\bt\sum_{|n|\leq N}n\vv_n=\sum_{|n|\leq N}n(1+\cE_9)= \frac{N(N+1)}{2}\cE_9.
\]
Concerning $\widetilde{C}_N^{(2,2)}$, we use that $\sum_{|n|\leq N} n^2 = \frac{2}{3}N^3+N^2+\frac{N}{3}$ to get
\begin{align*}
  \widetilde{C}_N^{(2,2)} &=
    \left(2N+1+\mathsmaller{\sum_{|n|\leq N}}n^2\right)(1+\cE_8)+\cE_7\cN_2^2 = \left(2N+1+\mathsmaller{\sum_{|n|\leq N}}n^2\right)(1+\cE_{10}).
\end{align*}
Finally, similar computations give
\begin{align*}
  \det\widetilde{C}_N &= \left(2N+1+\mathsmaller{\sum_{|n|\leq N}}n^2\right)(2N+1)(1+\cE_{11})+ (\cE_8\cN_1^2)^2 \\
&= \left(2N+1+\mathsmaller{\sum_{|n|\leq N}}n^2\right)(2N+1)(1+\cE_{12}),
\end{align*}
remembering the definition of $\cN_i^2$.
This concludes the proof of Lemma \ref{lemc}.
\end{proof}

Now we are ready to compute $C_N$ using formula \eqref{formulac}.
%
Lemma \ref{lemc} and \eqref{defm} give
\begin{align}
  \begin{split}
    \label{gammafin}
    & C_N(K(s)) =  M(s)^{-T}\widetilde{C}_N(s) M(s)^{-1} =
    \\
    &=
    I_N
    \left(
    \begin{array}{cc}
      (1+\psi')^2   & (1+\psi')\eta'   \\
      (1+\psi')\eta'   &  (\eta')^2  
    \end{array}
    \right) +
    II_N
    \left(
    \begin{array}{cc}
      -2(1+\psi')\eta'  & (1+\psi')^2-(\eta')^2   \\
      (1+\psi')^2-(\eta')^2   &  2 (1+\psi')\eta' 
    \end{array}
    \right) \\
    &+
        III_N
    \left(
    \begin{array}{cc}
      (\eta')^2   & -(1+\psi')\eta'   \\
      -(1+\psi')\eta'   &  (1+\psi')^2  
    \end{array}
    \right),
  \end{split}
\end{align}
where
 \begin{align*}
 I_N&= (2N+1)(1 + \cE_{13}), \qquad  II_N= \cE_{14}\cN_{3}^2, \\ 
 III_N&= 
 \left(2N+1+\mathsmaller{\sum_{|n|\leq N}}n^2\right)(1+\cE_{15}).
 \end{align*}
 To compute the eigenvalues and eigenvectors we need the trace and the determinant of $C_N(K(s))$.
 From a direct computation we have
   \begin{equation}\label{defdnn}
    d_N := \det C_N(K(s)) = \vv^2(I_NIII_N-II_N^2),
     \end{equation}
but also, since $\det M(s)=1 $, from Lemma \ref{lemc}, 
 \begin{equation*}
   d_N =\det \widetilde{C}_N(s) = \left(2N+1+\mathsmaller{\sum_{|n|\leq N}}n^2\right)(2N+1)(1+\cE_{12}).
 \end{equation*}
%
 Here, a similar argument as in Remark \ref{rem_ex} guarantees the existence of a positive constant $\kappa_2$ only depending on $\rho,\tau,\gamma$ such that if $\E< \kappa_2$  then $d_N> \frac{2}{3}N^4>0$.
Concerning the trace, we have
 \begin{align*}
   t_N:= \Tr C_N(K(s)) = \vv\left( I_N+ III_N  \right) = \left(2(2N+1)+\mathsmaller{\sum_{|n|\leq N}}n^2\right)(1+\cE_{19}).
 \end{align*}
%
 These expressions give us the possibility to compute the eigenvalues $\lambda^C_+=\lambda^C_+(s)$, $\lambda^C_-=\lambda^C_-(s)$ of $C_N(K(s))$ as roots of the characteristic polynomial and get the desired eigenvalues of $\Gamma_N$ as $\lambda_+=\frac{1}{\lambda^C_-(s)}, \lambda_-=\frac{1}{\lambda^C_+(s)}$.\\
To compute $\lambda_+$ we start from the formula

\begin{align*}
  \lambda^C_-&= \frac{1}{2} \left(t_N - \sqrt{t_N^2 -4d_N}\right),
\end{align*}
and note that there exist two positive constants $\kappa_3$ and $\kappa_4$ only depending on $\rho,\tau,\gamma$ such that if $\E< \kappa_3$ and $N>\frac{1}{\kappa_4}$ then $III_N-I_N>\frac{1}{3}N^3$, so that
\begin{align*}
  \sqrt{t_N^2 -4d_N} &= \sqrt{\vv^2(III_N-I_N)^2+4\vv^2II_N^2}\\
  &=\vv(III_N-I_N)+\frac{4\vv^2II_N^2}{\sqrt{\vv^2(III_N-I_N)^2+4\vv^2II_N^2} +(III_N-I_N)} \\
  &= \vv(III_N-I_N) +\cE_{16}\cN_{5}.
\end{align*}
Hence,
\begin{align*}
  \lambda^C_- &= \vv I_N + \cE_{16}\cN_{5} =(2N+1)(1 + \cE_{17}),
\end{align*}
and
\[
\lambda_+ = \frac{1}{2N+1}(1 + \cE_{18}).
\]
To compute $\lambda_-$ we have for $\E< \kappa_3$ and $N>\frac{1}{\kappa_4}$, 
\begin{align*}
\lambda_+^C &=  t_N -
\lambda^C_- =\left(2N+1+\mathsmaller{\sum_{|n|\leq N}}n^2\right)(1+\cE_{19}), 
 %
\end{align*}
so that
\[
\lambda_- = \frac{1}{2N+1+\sum_{|n|\leq N}n^2}(1 + \cE_{20}).
\]

Finally, the eigenvectors $u_+,u_-$ of $\Gamma_N$ corresponding to $\lambda_+,\lambda_-$ satisfy
\[
u_+ = u^C_-, \qquad u_-=u^C_-,
\]
where $u^C_+=u^C_+(s)$,$u^C_-=u^C_-(s)$ are the eigenvectors of $C_N(K(s))$ corresponding to $\lambda^C_+(s),\lambda^C_-(s)$.
Since the matrix is symmetric and positive definite, they are orthogonal and can be written, remembering that $\lambda^C_+ +\lambda^C_- = t_N $, as
\begin{align*}
 u^C_+  &=
\left(
\begin{array}{l}
  C_N^{(2,1)} \\
  \lambda^C_+ - C_N^{(1,1)}
\end{array}
\right) =
\left(
\begin{array}{l}
  C_N^{(2,1)} \\
  t_N - \lambda^C_- - C_N^{(1,1)}
\end{array}
\right)=
\left(
\begin{array}{l}
  C_N^{(2,1)} \\
  C_N^{(2,2)} -\lambda^C_- 
\end{array}
\right).
\end{align*}
Using the definition of $II_N$, we have
\begin{align*}
  C_N^{(2,1)} &= -(1+\psi')\eta'(III_N-I_N)+((1+\psi')^2-(\eta')^2)II_N\\
  &= -(1+\psi')\eta'(III_N-I_N)+\cE_{21}\cN_5^2,
\end{align*}
and
\begin{align*}
  C_N^{(2,2)} -\lambda^C_- &=(1+\psi')^2 III_N +I_N(\eta')^2+2(1+\psi')\eta'II_N-\vv I_N - \cE_{16}\cN_{4} \\
  &= (1+\psi')^2( III_N-I_N) +2(1+\psi')\eta'II_N-\cE_{16}\cN_4 \\
  &= (1+\psi')^2( III_N-I_N) +\cE_{22}\cN_5^2.
\end{align*}

Since $|\psi'|<\cE_1$, we can find a positive constant $\kappa_5<\kappa_3$ only depending $\rho,\tau,\gamma$ such that if $\E< \kappa_5$ and $N>\frac{1}{\kappa_4}$,  
\[
(1+\psi')( III_N-I_N) >\frac{1}{4}N^3,
\]
and we can take
\begin{align}
  \label{vect_fin}
  u_-=
\left(
\begin{array}{l}
 -\eta'+\cE_{23}\cN_{6}^{-1} \\
  (1+\psi')+\cE_{24}\cN_{7}^{-1}
\end{array}
\right)
=
\left(
\begin{array}{l}
 0 \\
  1
\end{array}
\right)
+
\left(
\begin{array}{l}
 \cE_{25} \\
 \cE_{26}
\end{array}
\right)
.
\end{align}
The expression of $u_+$, comes easily from the fact that $u_-\cdot u_+=0$.

We get the thesis choosing $\E < \underline{\kappa}$ and $N>\frac{1}{\underline{\kappa}}$ where $\underline{\kappa} =\min\{\kappa_1,\kappa_2,\kappa_3,\kappa_4,\kappa_5\} $ only depends on $\rho,\tau,\gamma$. Finally, we can take $\overline{\kappa}=\max\{c_{18},c_{20},c_{25},c_{26}\}$ where $c_i$ are the constants appearing in the definition of $\cE_i$.

\section{Comparison with the numerical results}\label{sec:comparison}

The numerical results, well represented in \cite[Fig. 5d]{ssm} and \cite[Fig. 10]{smil}, suggest that, with reference to Figure \ref{figure:ellipse}, both uncertainties $\sigma_x,\sigma_y$ are of order $1/\sqrt{N}$ for large $N$. This is compatible with our results in Theorem \ref{main} if the confidence ellipse is tilted (see Remark \ref{remtilt}). This means that the axes of the confidence ellipse $\mathcal{E}_N(x,y)$, corresponding to the eigenvectors $u_+,u_-$, are not parallel to the Cartesian axes. Actually, in this situation, the larger uncertainty, corresponding to the eigenvalue $\lambda_+$, projects on both coordinate axes.

From the expression of $u_+,u_-$ in Theorem \ref{main} one cannot deduce that the confidence ellipse must be tilted. However, it comes from the proof of the theorem in Section \ref{sec:confidecne} that the eigenvector $u_+$ has a special form. \\
To describe it, we fix $\E<\underline{\kappa}$ and we recall that the set $\mathcal{K}_\E$ is made of invariant curves of the form $K_\omega(s)=(s+\psi_\omega(s),\eta_\omega(s))$ for all $\omega$ satisfying a Diophantine condition. Each nominal solution in $\mathcal{K}_\E$ is of the form $(x,y)=K_{\omega}(s)$ for some $s,\omega$. With this notation, from \eqref{vect_fin}    

\begin{align*}
 u_+
=
\left(
\begin{array}{l}
1+\psi_{\omega}' \\
  \eta_{\omega}'  
\end{array}
\right)+
\cN^{-1}
\left(
\begin{array}{l}
\cE \\
  \cE 
\end{array}
\right),
\end{align*}
where $(1+\psi_{\omega}',\eta_{\omega}')^T$ is the tangent vector to curve $K_{\omega}$ in $(x,y)$. More precisely, the vertical component of $u_+$ can be written as $\eta_{\omega}'+\cN^{-1}\cE$ with
  \[
\cN^{-1}\cE \leq c_{23}\tilde{c}_6N^{-1}\E< c_{23}\tilde{c}_6N^{-1}\underline{\kappa}, \qquad \mbox{for } N>\frac{1}{\tilde{c}_6}.
  \]
  Therefore if  $\eta_{\omega}'\neq 0$ and $N>N_0$ with
  \[
N_0=\max\left\{ \frac{1}{\tilde{c}_6}, \frac{c_{23}\tilde{c}_6\underline{\kappa}}{\eta_{\omega}'}\right \}
  \]
  then the eigenvector $u_+$ is not tangent to the coordinate axes.\\
We will see later that the condition $\eta_{\omega}'\neq 0$ is of full measure in $\mathcal{K}_\E$. However, this result does not give a generalization of Theorem \ref{main} since $N_0$ depends on $\eta_{\omega}'$ and is no more uniform in $\mathcal{K}_\E$. \\
We conclude proving that the condition $\eta_{\omega}'\neq 0$ is of full measure in $\mathcal{K}_\E$. Let $\gamma>0$, $\tau>2$ be as in Section \ref{sec:confidecne} and consider the set $\mathcal{D}_{\gamma,\tau}$ defined in \eqref{dc}.
Fix $\E$ as in Theorem \ref{main} and denote by $\mathcal{K}_0$ the set of points in $\mathcal{K}_{\E}$ such that $\eta_\omega'(s) = 0$ for some $\omega\in\mathcal{D}_{\gamma,\tau}$ and $s\in [0,2\pi]$. The next lemma proves that $\mathcal{K}_0$ has zero measure.
\begin{lemma}\label{lem_zero}
  If the function $g$ in \eqref{kammap} is not identically zero, then the set $\mathcal{K}_0$ defined as before has zero Lebesgue measure.
  \end{lemma}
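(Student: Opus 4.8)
The plan is to show that $\mathcal{K}_0$ is contained in a finite union of real-analytic curves, each of which is the graph of $\eta_\omega$ restricted to the (at most finite) set of critical points of $\eta_\omega$; since such a set is lower-dimensional, it has zero Lebesgue measure in $\mathbb{A}$. The key observation is that the dependence of the invariant embedding $K_\omega$ on both the parameter $s\in\TT$ and the rotation number $\omega\in\mathcal{D}_{\gamma,\tau}$ is real-analytic. Indeed, $\mathcal{D}_{\gamma,\tau}$ is a closed set with empty interior, but one can invoke the Whitney-analytic (or Lipschitz-from-below, depending on the formulation one prefers) dependence of the KAM torus on $\omega$ from Theorem \ref{kamtheo}: the family $\omega\mapsto K_\omega$ extends to a real-analytic function of $\omega$ on a full neighborhood in $\RR$. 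Thus $(\omega,s)\mapsto \eta_\omega(s)$ and $(\omega,s)\mapsto \eta_\omega'(s)$ are real-analytic on $(\text{open set})\times\TT$.

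First I would argue that for each fixed $\omega$, the function $s\mapsto \eta_\omega'(s)$ is not identically zero, using the exactness hypothesis. If $\eta_\omega'\equiv 0$ then the invariant curve $K_\omega$ is a horizontal line $\{y=\omega\}$ (after the conjugacy), and plugging $K_\omega(s)=(s+\psi_\omega(s),\omega)$ into the second equation of \eqref{kammap} gives $0 = g(s+\psi_\omega(s),\omega)$ for all $s$; since $s\mapsto s+\psi_\omega(s)$ is a diffeomorphism of $\TT$ (because $\|\psi_\omega'\|<1$ for $\E$ small), this forces $g(\cdot,\omega)\equiv 0$. By real analyticity of $g$, if this held for $\omega$ in a set of positive measure it would force $g\equiv 0$, contradicting the hypothesis. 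Hence, discarding at most a measure-zero set of $\omega$'s (those where $g(\cdot,\omega)\equiv 0$, which is a real-analytic condition and so is either all of $(a,b)$ or a discrete set, and the former is excluded), for each remaining $\omega$ the analytic function $\eta_\omega'$ has only finitely many zeros in $\TT$.

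Next I would organize the conclusion. Write $\mathcal{K}_0 = \bigcup_{\omega\in\mathcal{D}_{\gamma,\tau}} \{K_\omega(s) : \eta_\omega'(s)=0\}$. Consider the analytic map $F(\omega,s)=\eta_\omega'(s)$ on $U\times\TT$ with $U$ an open neighborhood of $\mathcal{D}_{\gamma,\tau}$. Its zero set $\mathcal{Z}=\{F=0\}$ is a real-analytic subvariety of $U\times\TT$; since $F$ is not identically zero on any slice $\{\omega\}\times\TT$ for $\omega$ outside a discrete set, $\mathcal{Z}$ has no interior and is a finite union of analytic curves plus isolated points — in particular $\mathcal{Z}$ has measure zero, and more importantly its image under the analytic map $(\omega,s)\mapsto K_\omega(s)$ (which is Lipschitz, hence does not increase Hausdorff dimension) has measure zero. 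Since $\mathcal{K}_0$ is contained in this image together with the (measure-zero) contribution of the discarded $\omega$'s, $\mu(\mathcal{K}_0)=0$.

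The main obstacle is justifying the real-analytic (or at least Whitney-smooth and locally Lipschitz) joint dependence of $K_\omega$ on $(\omega,s)$ in a way strong enough to conclude that the degenerate set is lower-dimensional; the KAM theorem as quoted (Theorem \ref{kamtheo}) only gives, for each fixed $\omega\in\mathcal{D}_{\gamma,\tau}$, the existence and size of $K_\omega$. One must either cite the standard Whitney-regularity refinement of KAM (e.g.\ as in \cite{pos2} or \cite{haro_book}) or, alternatively, avoid the parametric argument entirely: fix $\omega$, note that $g(\cdot,\omega)\not\equiv 0$ except for $\omega$ in a discrete set, deduce $\eta_\omega'\not\equiv 0$, hence $\{s:\eta_\omega'(s)=0\}$ is finite, so each $\{K_\omega(s):\eta_\omega'(s)=0\}$ is a finite set; then $\mathcal{K}_0$ is a union over $\omega\in\mathcal{D}_{\gamma,\tau}$ of finite sets, and one bounds its measure by a Fubini-type argument on the smooth family. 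I would present the fixed-$\omega$ analysis first (it is elementary and uses only the exactness/analyticity already available), and invoke the parametric regularity only where unavoidable.
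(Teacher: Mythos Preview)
Your proposal is correct, and the ``alternative'' you outline in the final paragraph is essentially what the paper does. A few differences are worth noting.

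First, the paper dispenses with your primary plan (Whitney-analytic extension in $\omega$, zero set as an analytic subvariety) and works directly with the fixed-$\omega$ dichotomy plus a Fubini computation. So the heavier parametric machinery you describe first is unnecessary here.

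Second, the paper treats the degenerate case $\eta'_\omega\equiv 0$ differently from you. You argue that $g(\cdot,c)\equiv 0$ can happen only for $\omega$ (or rather for the constant value $c$ of $\eta_\omega$) in a discrete set, invoking analyticity of $g$ in $y$. The paper instead argues in one stroke: if the union of the flat curves had positive Lebesgue measure in the phase space, then $g$ would vanish on a set of positive measure and hence identically, contradicting the hypothesis. This avoids any bookkeeping on which $\omega$'s are degenerate. (Incidentally, exactness is not what is used here; it is simply the second equation of \eqref{kammap} together with analyticity of $g$.)

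Third, for the nondegenerate curves the paper reparametrizes each $K_\omega$ as a graph $(t,\gamma_\omega(t))$, writes the Lebesgue measure as $\int\!\!\int \chi\,dt\,dy$, and then changes the outer variable from $y$ to $\omega$ to conclude that the inner integral vanishes for every $\omega$ (finite set on each curve). This change of variables is exactly the place where some regularity of $\omega\mapsto K_\omega$ is implicitly used; you were right to flag it as the delicate point, and the paper's presentation is brief on this step. Your Lipschitz-image argument and the paper's Fubini computation are two phrasings of the same idea.
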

\begin{proof}
  Since the invariant curves $K_\omega$ are analytic, for every $\omega$ we have that either $\eta'_\omega \equiv 0$ or $\eta_\omega'(s)$ vanishes in a set $N_{\omega}\subset [0,1]$ made of a finite number of points. We first prove that the set
  \[
  \bigcup_{\omega\in\mathcal{D}_{c,\tau}}\left\{(s+\psi_\omega(s),\eta_\omega(s)) \: : \: s\in [0,1], \eta'_\omega\equiv 0  \right\}
  \]
  has zero measure. Actually, by contradiction, using the invariance and the second equation in \eqref{kammap}, the analytic function $g(x,y)$ would vanish on a set of positive measure. Hence it would be identically zero contradicting our hypothesis on $g$.\\
Consider now the set
  \[
  \mathcal{A}= \bigcup_{\omega\in\mathcal{D}_{c,\tau}}\left\{(s+\psi_\omega(s),\eta_\omega(s)) \: : \: s\in N_\omega \right\},
  \]
  recalling that, for each $\omega$, the set $N_\omega\subset [0,1]$ has finite cardinality. We prove that also $\mathcal{A}$ has zero measure. Since $1+\psi'_\omega>0$ we can reparametrize $K_\omega$ as $(t,\gamma_\omega(t))$ and $\gamma'_\omega(t) =0 \Leftrightarrow \eta'_\omega(s)=0$. Hence, denoting by $\chi(\mathcal{A})$ the characteristic function of $\mathcal{A}$
  \begin{align*}
    \mu(\mathcal{A}) &= \int_\RR\left(\int_0^{1} \chi(\mathcal{A})     dt \right)dy \leq \int_{\mathcal{D}_{c,\tau}}\left(\int_0^{1} \chi(\mathcal{A})\sqrt{1+(\gamma'_\omega(t))^2} dt    \right)d\omega \\
    & = \int_{\mathcal{D}_{c,\tau}}\left(\int_{K_\omega} \chi(\mathcal{A})ds    \right)d\omega =0
    \end{align*}
  since, for every $\omega$ the set $\mathcal{A}\cap K_\omega$ is made of a finite number of points. This proves Lemma \ref{lem_zero}. 
  \end{proof}



\section{Conclusions}\label{sec:conclusions}
We considered the problem of orbit determination supposing that the
number of observations grows simultaneously with the time span over
which they are performed. We considered the case of analytic
perturbations of the integrable twist map of the cylinder and gave an
analytical description of the confidence region in a set of positive
measure as the number of observation grows.

This is an analytical proof of some numerical results obtained in
\cite{ssm,smil}. Our result covers the case of estimating only the
initial conditions in regular zones. The numerical results cover a
wider situation including the estimation of dynamical parameter and
chaotic zones. An analytical study of these cases will be the aim of
future works.

The problem can also be generalized to many other interesting settings
such as more degrees of freedom, continuous dynamics and different
observations processes. For example, as suggested in Remark \ref{remextend}, one can set up an orbit determination process in the case of Hamiltonian systems and define the corresponding confidence ellipse and Covariance Matrix. In analogy with our present results, invariant tori should take the place of invariant curves and the used results of KAM theory should be replaced by the corresponding ones for Hamiltonian systems. However, despite this apparent analogy, a generalization of our results to this setting seems not to be straightforward.

We will also stress that our result can be interpreted a step towards
the understanding of the relation between the orbit determination and
the dynamics.

\section*{Acknowledgements}
  This problem was proposed to me by Andrea Milani. This result and possible further developments are dedicated to his memory.

\end{document}